\newcommand{\commentout}[1]{}
\newcommand{\alert}[1]{\textbf{\color{red}
[[[#1]]]}\marginpar{\textbf{\color{red}**}}\typeout{ALERT:
\the\inputlineno: #1}}
\def\MathF{\hbox{\rm I\kern-2pt F}}
\def\MathP{\hbox{\rm I\kern-2pt P}}
\def\MathR{\hbox{\rm I\kern-2pt R}}
\def\MathZ{\hbox{\sf Z\kern-4pt Z}}
\def\MathN{\hbox{\rm I\kern-2pt I\kern-3.1pt N}}
\def\MathC{\hbox{\rm \kern0.7pt\raise0.8pt\hbox{\footnotesize I}
\kern-4.2pt C}}
\def\MathQ{\hbox{\rm I\kern-6pt Q}}
\newcommand{\diam}{{\rm diam}}
\newcommand{\rad}{{\rm rad}}
\newcommand{\N}{\mathbb{N}}
\newcommand{\R}{\mathbb{R}}
\newcommand{\E}{{\mathbb{E}}}
\newcommand{\mommit}[1]{}
\newcommand{\namedref}[2]{\hyperref[#2]{#1~\ref*{#2}}}
\newcommand{\sectionref}[1]{\namedref{Section}{#1}}
\newcommand{\theoremref}[1]{\namedref{Theorem}{#1}}
\newcommand{\figureref}[1]{\namedref{Figure}{#1}}
\newcommand{\claimref}[1]{\namedref{Claim}{#1}}
\newcommand{\lemmaref}[1]{\namedref{Lemma}{#1}}
\newcommand{\corollaryref}[1]{\namedref{Corollary}{#1}}
\newtheorem{theorem}{Theorem}
\newtheorem{lemma}{Lemma}
\newtheorem{corollary}[lemma]{Corollary}
\newtheorem{claim}[lemma]{Claim}
\newtheorem{definition}{Definition}
\def\eps{\epsilon}
\title{Lossless Prioritized Embeddings}
\author[1]{Michael Elkin}
\author[1]{Ofer Neiman}
\affil[1]{Ben-Gurion University of the Negev. Email: \texttt{\{elkinm,neimano\}@cs.bgu.ac.il}}
\date{}
\begin{document}
\maketitle
\begin{abstract}
Given metric spaces $(X,d)$ and $(Y,\rho)$ and an ordering $x_1,x_2,\ldots,x_n$ of $(X,d)$, an embedding $f: X \rightarrow Y$ is said to have a {\em prioritized distortion} $\alpha(\cdot)$, for a function $\alpha(\cdot)$, if for any pair $x_j,x'$ of distinct points in $X$, the distortion provided by $f$ for this pair is at most $\alpha(j)$. If $Y$ is a normed space, the embedding is said to have {\em prioritized dimension} $\beta(\cdot)$, if $f(x_j)$ may have  non-zero entries only in its first $\beta(j)$ coordinates.

The notion of prioritized embedding was introduced by Filtser and the current authors in \cite{EFN15}, where a rather general methodology  for constructing such embeddings was developed. Though this methodology enables \cite{EFN15} to come up with many prioritized embeddings,
it typically incurs some {\em loss} in the distortion. In other words, in the worst-case, prioritized embeddings obtained via this methodology incur distortion which is at least a constant factor off, compared to the distortion of the classical counterparts of these embeddings. This constant loss is problematic for isometric embeddings. It is also troublesome for Matousek's embedding of general metrics into $\ell_\infty$, which for a parameter $k = 1,2,\ldots$, provides distortion $2k-1$ and dimension $O(k \log n \cdot n^{1/k})$.

In this paper we devise two {\em lossless} prioritized embeddings. The first one is an {\em isometric prioritized} embedding  of tree metrics into $\ell_\infty$ with dimension $O(\log j)$, matching the worst-case guarantee of $O(\log n)$ of the classical embedding of Linial et al. \cite{LLR95}. The second one is a prioritized Matousek's embedding of general metrics into $\ell_\infty$, which for a parameter $k=1,2,\ldots$, provides prioritized distortion $2 \lceil k {{\log j} \over {\log n}} \rceil - 1$ and dimension $O(k \log n \cdot n^{1/k})$, again matching the worst-case guarantee $2k-1$ in the distortion of the classical Matousek's embedding.

We also provide a dimension-prioritized variant of Matousek's embedding. Finally, we devise prioritized embeddings of general metrics into (single) ultra-metric and of general graphs into (single) spanning tree with asymptotically optimal distortion.

\end{abstract}
\section{Introduction}

An injective function $f: X \rightarrow Y$ between the metric spaces $(X,d)$ and $(Y,\rho)$ is called an {\em embedding}. The embedding is said to have {\em distortion} at most $\alpha$, if for for every pair $x,x' \in X$ of distinct points we have
$1 \le \frac{\rho(f(x),f(x'))}{d(x,x')} \le \alpha$.\footnote{This definition applies actually to {\em non-contracting} embeddings. For a {\em non-expanding} of {\em Lipschitz} embedding, one requires $1/\alpha \le \frac{\rho(f(x),f(x'))}{d(x,x')} \le 1$. Yet more generally, there may also be some normalizing term $\eta$, so that $\eta \le \frac{\rho(f(x),f(x'))}{d(x,x')} \le \alpha \cdot \eta$.
The quotient $\frac{\rho(f(x),f(x'))}{d(x,x')}$ is called the {\em distortion} of the pair $x,x'$ under the embedding $f$.}

Study of low-distortion embeddings, initiated with seminal works of \cite{JL82,A83,B85} in the eighties has flourished since then, and has found numerous algorithmic applications. See, e.g., \cite{LLR95,B96,B98,FRT04,GKL03,LMN04,ABN06,EFN15,EFN15t,BFN16,AFGN18}, and the references therein. Closely related to it is the study of metric structures, such as graph spanners and low-stretch spanning trees, distance labeling schemes and distance oracles
\cite{PS89,AKPW95,EP04,EEST05,TZ01,T01,Peleg99,CE06,AB17,EN19}, and the references therein.

In all known metric embeddings either the distortion or the dimension of the target metric space (or, in most cases, both) depend at least logarithmically on the cardinality $n = |X|$ of the point set, and moreover, this dependence is typically unavoidable \cite{LLR95,ABN06}.
This dependence was always viewed as an inherent downside of the entire approach. To alleviate the issue, Filtser and the current authors \cite{EFN15} introduced\footnote{There were earlier works \cite{KSW04,ABCD05,ABN06} that studied partial and scaling embeddings, that provide guarantees in terms of a parameter $\eps > 0$, for all pairs $(x,x')$ such that $x'$ is not among the $\eps\cdot n$ closest points to $x$. However, these embeddings do not allow one to select for which pairs of points a better distortion is provided.} the notions of {\em prioritized} embeddings and metric structures.
For some function $\alpha(\cdot)$, given an ordering $x_1,x_2,\ldots,x_n$ of the points in the source metric space $X$, an embedding $f: X \rightarrow Y$ is said to be {\em prioritized} with {\em distortion} $\alpha(\cdot)$, if it guarantees that each pair $(x_j,x')$ of distinct points of $X$ admits distortion at most $\alpha(j)$ (as opposed to $\alpha=\alpha(n)$ guaranteed by classical, non-prioritized embeddings). In the case that $Y$ is a normed space, it is said to be {\em prioritized} with {\em dimension} $\beta(\cdot)$, if the number of nonzero coordinates in $f(x_j)$ is at most $\beta(j)$, and moreover, these nonzero coordinates are the first $\beta(j)$ coordinates of $f(x_j)$.

Even though these notions might look too demanding, surprisingly \cite{EFN15} showed that they are achievable for numerous settings.
In particular, a prioritized analogue of Bourgain's embedding \cite{B85,ABN08} shown there states that any $n$-point metric space can be embedded in $\mathbb{R}^{O(\log n)}$ with prioritized distortion $O(\log j \cdot (\log\log j)^{1/2+\eps})$, for an arbitrarily small constant $\eps >0$. This was later refined in \cite{BFN16} to the optimal bound of $O(\log j)$. Moreover, it was shown in \cite{EFN15} that one can achieve prioritized distortion {\em and} dimension of $\mathit{poly}(\log j)$ simultaneously in embedding general metrics into Euclidean ones.

Another notable embedding shown in \cite{EFN15} is an embedding of general metrics into a (single) tree metric. For the classical scenario a tight bound of $\Theta(n)$ for such embeddings is long well-known \cite{RR98}. In \cite{EFN15} Filtser and the current authors devised for this scenario a prioritized embedding with distortion $O(\alpha(\cdot))$ for any function $\alpha(\cdot)$ that satisfies
\begin{equation}
\label{eq:converg_cond}
\sum_{j=1}^\infty {1 \over {\alpha(j)}} < 1~.
\end{equation}
Moreover, it was shown there that the condition (\ref{eq:converg_cond}) is necessary, i.e., if $\alpha(\cdot)$ does not satisfy the condition, no such embedding with prioritized distortion at most $\alpha(\cdot)/8$ is possible. Observe that linear functions do not satisfy the condition
 (\ref{eq:converg_cond}), and thus a certain {\em loss} when extending classical embeddings to prioritized setting {\em must} in general be incurred. (Observe also that $\alpha(n) = c\cdot n \cdot \log n (\log\log n)^{1.01} $ does satisfy  (\ref{eq:converg_cond}) for a constant $c$, and thus the loss in this case is not too large, i.e., it is at most polylogarithmic in the original classical distortion.)
On the other hand, for the FRT embedding of general metrics into ultrametrics  \cite{B96,B98,FRT04}, it was shown in \cite{EFN15} that an embedding with prioritized distortion $O(\log j)$ exists, i.e., in this case the loss is at most some constant.

In fact, \cite{EFN15} presented a rather general methodology for converting classcial embeddings and metric structures into prioritized ones (with respect to distortion). Roughy speaking, the methodology consists of two parts. In the first part one devises a {\em strong  terminal} analogue of the classical embedding.\footnote{We say that an embedding $f: X \rightarrow Y$ has {\em terminal} distortion $\alpha(\cdot)$ with respect to a subset $K \subseteq X$, $|K| = k$, of $k$ terminals, if it guarantees distortion $\alpha(k)$ for all pairs $(x,x') \in K \times X$. The embedding has {\em strong terminal} distortion, if it has terminal distortion, and in addition it has distortion $\alpha(n)$ for all pairs of points.} In the second part, one partitions the point set $X$ into a collection of disjoint subsets $X = \bigcup X_i$ according to the given priority ordering, where the set sizes $|X_1|, |X_2|,\ldots$ grow quickly.\footnote{The specific growth rate depends on the embedding at hand.} One then builds strong terminal embeddings $f_i$ with respect to each of the sets $X_i$ as terminal sets, and carefully combines these embeddings into the ultimate prioritized embedding $f$.

Typically both these steps incur some loss in distortion. The loss incurred by the first part of this scheme is at most constant, under very general conditions described in \cite{EFN15t}. The loss incurred in the second part is not yet well-understood. Specifically, in the prioritized variant of Bourgain's embedding \cite{EFN15} it is $O((\log\log j)^{1/2+\eps})$, for an arbitrarily small $\eps > 0$,\footnote{The improvement in \cite{BFN16} was achieved by a different approach.} while in the prioritized variant of the FRT embedding \cite{EFN15} the loss is at most a constant.

The constant loss in the first part of the scheme has to do with an outer extension, implicitly developed in \cite{EFN15t}, and explicated in \cite{MMMR18,NN18}. Bi-Lipschitz outer extensions have been a focus of recent research \cite{MMMR18,NN18,EN18}, where they were studied in the context of Johnson-Lindenstrauss dimension reduction \cite{MMMR18,NN18}, and in the context of doubling metrics \cite{EN18}. In both these contexts it was shown that the loss can be made at most $1+ \eps$, for an arbitrarily small $\eps > 0$. However, in general, Mahabadi et al. \cite{MMMR18} showed that it is unfortunately not the case, as there are scenarios when Bi-Lipschitz outer extensions incur loss at least $c$, for some  constant $c > 1$.

Generally speaking, the loss is most troublesome in two scenarios.  The first one is when the embedding is isometric or near-isomentric (i.e., has distortion $1+\eps$, for an arbitrarily small $\eps>0$), as it is the case for the Johnson-Lindenstrauss dimension reduction \cite{MMMR18,NN18} and for embedding of doubling metrics into $\ell_\infty$ \cite{EN18}. The second scenario is when the specific leading constant factor in the distortion is important. Consider Matousek's embedding \cite{Mat96b} of general metrics into $\ell_\infty$. Given an integer parameter $k = 1,2,\ldots$, this embedding has distortion $2k-1$ and dimension $O(k n^{1/k} \cdot \log n)$. Employing the generic Lipschitz outer extension of \cite{EFN15t} on this embedding results in a prioritized embedding with distortion $4 \lceil {{k\log j} \over {\log n}} \rceil-1$ and dimension
$O(k^2\cdot n^{1/k}\cdot \log n)$.  (See \sectionref{sec:mat-prior} for more details.) In other words, in this case there is a significant loss both in the distortion and in the dimension. (Note that for distortion $4k-1$ the classical Matousek embedding provides dimension
$O(k \cdot n^{1/2k} \log n)$ , i.e., almost {\em quadratically smaller} than the dimension obtained via the generic methodology.)

In this paper we address both these scenarios. We devise an isometric embedding of tree metrics into $\ell_\infty$ with {\em prioritized  dimension} $O(\log j)$. Note that in the context of {\em isometric} embeddings, no loss whatsoever can be tolerated. This is unlike the case of near-isometric embeddings, where one can afford some loss in the distortion. This is indeed the case in the terminal Johnson-Lindenstrauss dimension reduction of Mahabadi et al. \cite{MMMR18} and Narayanan and Nelson \cite{NN18}, where the distortion grows from
$1+ \eps$ to $1 +O(\sqrt{\eps})$ in \cite{MMMR18} and to $1 + O(\eps)$ in \cite{NN18}.  The loss is similar (i.e., $1+ \eps$ becomes $1 + O(\eps)$) in the terminal embedding of \cite{EN18} (by the current authors) of doubling metrics to $\ell_\infty$.
Note also that the aforementioned embeddings \cite{MMMR18,NN18,EN18} are {\em not prioritized}\footnote{It is moreover open if these embeddings can be made prioritized. This question was explicitly studied in \cite{MMMR18}, where a prioritized version of JL dimension reduction was shown, albeit with a much larger (than $1+\eps$) distortion of $O(\log\log j)$,  and under a weaker notion of prioritized distortion than the one we study here.}, while our embedding of tree metrics into $\ell_\infty$ is.
This embedding of ours is both isometric  and {\em lossless prioritized}, while previously there were no known\footnote{While preparing this submission we were informed \cite{FGK19} that a similar result was lately achieved independently of us by Filtser, Gottlieb and Krauthgamer.} strong terminal or prioritized isometric embeddings. Moreover, the prioritized dimension $O(\log j)$ of our embedding is optimal \cite{matbook}.

Our second contribution addresses the second scenario discussed above. Specifically, we devise a {\em lossless prioritized} variant of Matousek's embedding. We show that for any positive integer parameter $k = 1,2,\ldots$, there is an embedding of general metrics into $\ell_\infty$ with prioritized distortion $2 \lceil {{k\log j} \over {\log n}} \rceil  - 1$ and dimension $O(k \cdot  n^{1/k} \cdot \log n)$. Observe that by substituting $j= n$ one obtains the parameters of classical Matousek's embedding {\em without any loss} in distortion, and the dimension remains asymptotically the same.

We also devise a dimension-prioritized variant of Matousek's embedding. Specifically, this variant of our embedding, for any positive integer parameter $k = 1,2,\ldots$, provides distortion $O(k \log\log j)$ and prioritized dimension  $O(k \cdot j^{1/k} \cdot \log n)$.
(This embedding does  leave some room for improvement. One can hope to obtain distortion $2k-1$ and dimension as above, but with $\log n$ replaced by $\log j$.)

Finally, our third contribution has to do with the aforementioned prioritized embedding of \cite{EFN15} of general metrics into  (single) tree metrics.
As was discussed above, this embedding provides prioritized dimension $\alpha(\cdot)$ for any function $\alpha$ that satisfies
the condition (\ref{eq:converg_cond}), and this condition is tight. In our opinion, this embedding is a Drosophila of prioritized embeddings, i.e., it is fundamental to the area, and deserves further exploration.
In this paper we devise an embedding of general metrics into a (single) ultrametric (see \sectionref{sec:single-ultra} for the definition), and an embedding of general graphs into a (single) spanning tree.
Both these embeddings have the same (tight up to constant factors) prioritized distortion $O(\alpha(\cdot))$, for any function $\alpha(\cdot)$ that satisfies the condition (\ref{eq:converg_cond}). (The tightness follows from \cite{EFN15}.)

\subsection{Technical Overview}

\paragraph{Embedding General Metrics into $\ell_\infty$.}
The embedding of Matousek \cite{Mat96b} is a Frechet one, i.e., for every index $i$, the $i$th coordinate of an image $f(x)$ of an input point $x \in X$ is a distance from $x$ to a certain subset $S_i \subseteq X$. The set $S_i$ (henceforth, a {\em Frechet} subset) is constructed by sampling points from $X$ independently at random with appropriate probabilities. In the analysis of distortion for a fixed pair of points $x,x' \in X$ such that $d(x,x') = d$, one sets
$\delta = {d \over {2k-1}}$. One then considers pairs of balls $(B_0 = B(x,0 \cdot \delta),B'_1=B(x',1\cdot \delta)),(B'_1,B_2= B(x,2\cdot \delta)),\ldots,(B_{k-1},B'_k = B(x',k \cdot \delta))$.
Intuitively, for some index $i \le k-1$, the ratio between the sizes of $B'_{i+1}$ and $B_i$ is at most $n^{1/k}$, and at this point, with high probability, one of the two balls hits a Frechet set $S_j$ of points selected with sampling probability $n^{-(1 - i/k)}$, while the other ball does not. Therefore, in the respective coordinate $f(x)$ and $f(x')$ (where $f(\cdot)$ is the ultimate embedding) will differ by at least $\delta$, ensuring that the distortion is at  most $2k-1$.

In our prioritized variant  of this embedding we carefully design sampling probabilities so that higher-priority points are more likely to be included in the sets $S_j$ than lower-priority ones. In the analysis of distortion for a pair $(x,x')$ that contains a high-priority point $x$, we build a different, much shorter sequence of ball-pairs $(B_i,B'_{i+1}),(B'_{i+1},B_{i+2}),\ldots,(B_{k-1},B'_k)$, where the index $i$ depends on the priority of $x$. The balls now will have radii that are multiples of $\delta_i = {d \over {2(k-i) - 1}}$, as opposed to $\delta = {d \over {2k-1}}$ used in the classcial embedding of Matousek. In other words, while Matousek's analysis of distortion for a pair $(x,x')$ with $d(x,x') = d$ considers balls of radii $0,{d \over {2k-1}},{{2d} \over {2k-1}},\ldots,{{k-1} \over {2k-1}} d$, our analysis of the distortion considers a sequence of radii that depends not just on the distance between $x$ and $x'$, but also on the priorities of these points.
Since this sequence of radii is shorter than $k$ for pairs containing at least one high-priority point, the resulting distortion  is smaller than $2k-1$. Moreover, for points with very high priorities (i.e., priorities at most $n^{1/k}$), the resulting distortion between them and {\em all other points} is exactly $1$. For points with priorities between $n^{1/k}$ and $n^{2/k}$, all pairs involving them admit distortion at most 3, etc..

Our approach also leads to {\em dimension}-prioritized
variant of Matousek's embedding. As high-priority points are more likely to be included in the Frechet sets, they end up having more zeroes in the respective coordinates. However, the argument becomes insufficient when the sampling probabilities are very small, e.g., $1/n$.
So we need to have only very few Frechet sets with such a small sampling probability (rather than $\approx n^{1/k}$ such sets in the standard embedding). In order to allow the distortion analysis to go through, the idea is to refine the sampling probabilities, and introduce $\log\log n$ scales that have smaller "gaps" between consecutive levels as the probabilities get closer to $1/n$. This increased number of scales translates to the (prioritized) factor of $\log\log j$ in the distortion.

\paragraph{Embedding of trees to $\ell_\infty$ with prioritized dimension.}
The isometric embedding of a tree into $\ell_\infty$ \cite{LLR95} is based on centroids: every tree $T$ has a vertex $s$, such that there are two trees $T_1$, $T_2$ who cover $T$ and share only $s$, each of size at most $2|T|/3$. One can define an embedding to $\R$ that preserves exactly distances between each $u\in T_1$ to every $v\in T_2$, and then embed the two trees recursively.

How can we guarantee that high priority points will receive few nonzero coordinates in such an embedding? We can try to follow the \cite{EFN15} methodology described above: partition the vertex sets into few terminal sets $K_1,K_2,\dots$, and apply a strong terminal embedding on each $K_i$. However, it is not clear how to combine the different embeddings without incurring large dimension $\Theta(\log n)$ for all points. One possible attempt is to contract all the vertices in $K_i$ to a single vertex $z$, and only then apply the embedding for $K_{i+1}$, ensuring that $z$ will be mapped to 0 in that embedding. The issue is that contraction may completely change the distances.

To handle this problem, we introduce a novel graph operation, that we call {\em folding}. Rather than contracting $K_i$, we iteratively "fold" the vertices in $K_i$ one to the other, keeping half of the shortest path between them intact. While this operation can also change distances, we prove a structural lemma characterizing which pairs may be affected, and ensure that such pairs are handled by the previous level embedding.

\paragraph{Priority embedding of arbitrary metrics into a single tree.}
The embedding of \cite{EFN15} is based on re-weighting the input metric (or graph), and finding its MST. Our embedding into ultrametrics is based on a different approach: we use hierarchical ball partitions, and in each iteration we find a suitable ball that does not separate points if their distortion will be too large. We show that one can find such a ball which is simultaneously good for all possible priorities. In the case of spanning trees, we apply the Petal Decomposition framework of \cite{AN12}, combined with this ball partition.


\section{Prioritized Embedding of Trees Into $\ell_\infty$}

\subsection{Path Folding}
For a graph $G=(V,E)$, {\em identifying} two vertices $x,y\in V$ means replacing them by a single vertex $z$, which is connected to every neighbor of $x$ and of $y$ (keeping only the shorter edge among parallel edges). We often abuse notation and denote the vertex set of the new graph obtained after this operation by $V$ as well, that is, we may refer to the new vertex $z$ as either $x$ or $y$.

Given a graph $G=(V,E)$, two vertices $u,v\in V$, and some shortest path $P$ from $u$ to $v$ of length $t$, for each $0\le\alpha\le 1$ let $x_\alpha$ be the (possibly imaginary) vertex at distance $\alpha\cdot t$ from $u$ on the path $P$. (So that $x_0=u$ and $x_1=v$.)  Define the {\em folding} of $P$ as the operation that produces a graph $G'$ out of $G$ by identifying, for every $0\le\alpha\le 1/2$, the vertex $x_\alpha$ with the vertex $x_{1-\alpha}$. The vertex $x_{1/2}$ is called the {\em folding point}. See \figureref{fig:folding}.

\begin{figure}[t]
	\centering{\includegraphics[scale=0.62]{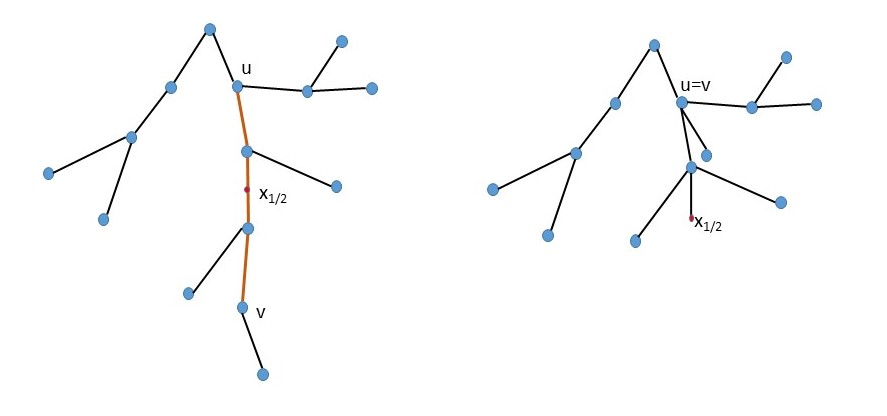}}
	\caption{\label{fig:folding} \small  An example of path folding. In the left is a tree with the $u-v$ path marked, on the right is the tree obtained by folding this path, with $x_{1/2}$ the folding point.}
\end{figure}

\begin{claim}\label{claim:remain-tree}
Let $T$ be a tree, then for any path $P$ in $T$, if $T'$ is obtained from $T$ by folding $P$, then $T'$ is a tree as well.
\end{claim}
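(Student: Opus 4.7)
The plan is to first reduce to a clean combinatorial setup by subdividing the edges of $P$: we add auxiliary degree-$2$ vertices along $P$ so that the path has vertex sequence $p_0, p_1, \ldots, p_m$ with $m$ even, $p_0 = u$, $p_m = v$, and an actual (not imaginary) vertex $p_{m/2}$ sits at distance $t/2$ from $u$. Subdividing preserves the tree property of $T$, and suppressing these degree-$2$ vertices after folding recovers the original $T'$ while preserving the tree property; so it suffices to prove the claim in this normalized setting. The folding is then the quotient map $\pi: V(T) \to V(T')$ that identifies $p_i$ with $p_{m-i}$ for $0 \le i \le m$ (with $p_{m/2}$ fixed), is the identity on vertices off $P$, and is followed by collapsing parallel edges to a single edge.

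I would establish connectedness of $T'$ by a direct projection argument: given any vertex $z \in V(T')$, choose a preimage $\tilde z \in \pi^{-1}(z)$, take the path from $\tilde z$ to $u$ in $T$, and apply $\pi$ edgewise to obtain a walk from $z$ to $\pi(u)$ in $T'$. Hence every vertex of $T'$ is connected to $\pi(u)$.

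I would then count edges. The $m/2$ non-trivial identifications of pairs $\{p_i, p_{m-i}\}$ for $0 \le i \le m/2 - 1$ reduce the vertex count by exactly $m/2$, so $|V(T')| = |V(T)| - m/2$. Because $m$ is even, no path edge $p_i p_{i+1}$ satisfies $\pi(p_i) = \pi(p_{i+1})$, so no self-loops are created. The only pairs of edges that become parallel under $\pi$ are the $m/2$ pairs of path edges $\{p_i p_{i+1},\, p_{m-i-1} p_{m-i}\}$ for $0 \le i \le m/2 - 1$: any off-path edge has at least one endpoint whose $\pi$-image is unique to that edge, so off-path edges neither merge with each other nor with path edges. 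Hence $|E(T')| = |E(T)| - m/2 = |V(T)| - 1 - m/2 = |V(T')| - 1$, which together with connectedness yields that $T'$ is a tree.

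The main obstacle I anticipate is the careful enumeration of which edges become parallel or self-loops under $\pi$, and in particular justifying that no off-path edges accidentally merge; this relies crucially on the fact that the subtrees of $T$ hanging off different path vertices (i.e.\ the components of $T \setminus E(P)$) are pairwise vertex-disjoint, which follows from $T$ being a tree. A cleaner conceptual alternative I would keep as a backup is to decompose $T = P \cup \bigcup_{i=0}^{m} S_i$, where $S_i$ is the component of $T \setminus E(P)$ containing $p_i$, observe that folding merges $S_i$ with $S_{m-i}$ at their common new root $z_i = \pi(p_i)$ (producing a tree, as a union of two vertex-disjoint trees sharing a single vertex), and then glue the $m/2 + 1$ resulting trees along the folded path $z_0 z_1 \cdots z_{m/2}$ by induction, adding one bridge edge at a time.
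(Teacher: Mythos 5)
Your proof is correct, but it takes a genuinely different route from the paper's. The paper proves acyclicity by contradiction: given a hypothetical cycle $C'$ in $T'$, it lifts $C'$ back to $T$ by replacing each folded vertex of $C'$ whose two incident $C'$-edges originate from opposite sides of $P$ by the subpath of $P$ joining $x_\alpha$ to $x_{1-\alpha}$; the result is a nontrivial closed walk in $T$, contradicting that $T$ is a tree (connectedness of $T'$ is dismissed in one line). You instead normalize by subdividing so that $P$ is vertex--mirror--symmetric with $m$ even and the midpoint a real vertex, establish connectedness by projecting $T$-paths through the quotient map $\pi$, and then count: the $m/2$ identifications kill exactly $m/2$ vertices and, since the only edges that collapse or become parallel are the $m/2$ mirrored path-edge pairs (no self-loops because $m$ is even), exactly $m/2$ edges, so $|E(T')|=|V(T')|-1$ and $T'$ is a tree. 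The counting route is more mechanical and requires the careful parallel-edge bookkeeping you flag; the key structural fact you correctly invoke is that in a tree each off-path vertex lies in exactly one component of $T\setminus E(P)$ and hence is adjacent to at most one vertex of $P$, so two off-path edges cannot merge. The paper's pullback is shorter but more terse (it quietly uses that the lifted closed walk contains a genuine cycle); your version makes the tree invariant explicit. Your ``backup'' decomposition of $T$ into $P$ plus the hanging subtrees $S_i$ is essentially a structured restatement of the same count.
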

\begin{proof}
It is easy to see that $T'$ is connected (identifying vertices cannot disconnect a graph).
Seeking contradiction, let $C'$ be a cycle in $T'$. Clearly $C'$ must contain some of the vertices $\{x_\alpha\}$, as otherwise $C'$ would be a cycle in $T$. We now create a cycle in $T$, and derive a contradiction. For every (non-imaginary) vertex $x_\alpha$ in $C'$: if both edges of $C'$ touching it are edges of $T$ that touch $x_\alpha$ (respectively, $x_{1-\alpha}$), we do nothing. Otherwise, we replace $x_\alpha$ by the subpath of $P$ from $x_\alpha$ to $x_{1-\alpha}$. After all these replacements, we get a cycle in $T$.
\end{proof}

We also consider multiple foldings: given a set $K\subseteq V(T)$ of designated vertices in the tree $T$, a {\em $K$-folding} generates a tree $\hat{T}$ in which all vertices of $K$ are identified as the same vertex. This is done by iteratively selecting two (arbitrary) vertices in $K$ that are not identified yet, and folding them along the unique shortest path, so that they become identified. Observe that \claimref{claim:remain-tree} implies that $K$-folding indeed generates a tree, but that tree is not unique.

For $u,v\in T$ let $P_{uv}$ be the unique path between them in $T$. The purpose of the next claim is to show that folding a path in a tree can change distances only for pairs $u,v$ whose path $P_{uv}$ in $T$ crosses (non-trivially) the folding point. For a path $P$ with a folding point $x=x_{1/2}$, we denote by $P^-$ the first half of the path: the interior of the path from $x_0$ to $x$. Denote by $P^+$ the interior of the path from $x$ to $x_1$. We say that a pair of vertices $u,v$ {\em crosses} $x$, if $P_{uv}$ intersects both $P^-$ and $P^+$.
\begin{claim}
Let $T$ be a tree, and $T'$ the tree obtained by folding some path $P$. If $x=x_{1/2}$ is the folding point, then for any $u,v\in T$ that do not cross $x$, we have
\[
d_T(u,v)=d_{T'}(u,v)
\]
\end{claim}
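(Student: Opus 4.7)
The plan is to locate a subtree of $T$ that contains the entire path $P_{uv}$ and is mapped isomorphically into $T'$ by the folding operation. Once this subtree is identified, the image of $P_{uv}$ in $T'$ will be a simple $u$-$v$ path of length $d_T(u,v)$; since $T'$ is a tree by \claimref{claim:remain-tree}, uniqueness of paths between any two vertices forces this image to be the unique $u$-$v$ path in $T'$, yielding $d_{T'}(u,v) = d_T(u,v)$.

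First I would observe that $P_{uv} \cap P$, being the intersection of two simple paths in the tree $T$, is a contiguous subpath $\{x_\alpha : \alpha \in [\alpha_1, \alpha_2]\}$ of $P$ (possibly empty or a single vertex). The non-crossing hypothesis forces $[\alpha_1,\alpha_2]$ to miss either $P^-$ or $P^+$; assume WLOG that $P_{uv}$ avoids $P^-$, so $[\alpha_1,\alpha_2] \subseteq [1/2,1]$ (the degenerate sub-cases $P_{uv} \cap P = \emptyset$ or $\{x_0\}$ are subsumed, swapping sides if needed). Writing $T_\alpha$ for the maximal subtree of $T$ attached to $x_\alpha$ (the connected component of $x_\alpha$ after deleting all edges of $P$), I will define $T^{+}$ to be the union of the subtrees $\{T_\alpha\}_{\alpha \in [1/2,1]}$ together with the closed upper half of $P$ from $x_{1/2}$ to $x_1$. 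This $T^+$ is a subtree of $T$ containing $u$, $v$, and the entire path $P_{uv}$.

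Next I will check that the folding quotient, restricted to $T^+$, is injective on vertices and preserves edge weights. The only vertices of $T^+$ touched by folding are those on the upper half of $P$, and each such $x_\alpha$ is identified with $x_{1-\alpha}$, which lies outside $T^+$ except when $\alpha = 1/2$ (the fixed folding point). Moreover, the two parallel edges in $P$ that get merged at any step have identical weights, because $x_\alpha$ and $x_{\alpha'}$ sit at distance $|\alpha - \alpha'|\cdot t$ on $P$ regardless of which half they lie in. Hence $T^+$ embeds isometrically into $T'$ as a subtree.

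The proof concludes by tracing $P_{uv}$ through this isometric embedding to obtain the unique $u$-$v$ path in $T'$ and reading off $d_{T'}(u,v) = d_T(u,v)$. The subtle points to handle carefully will be the bookkeeping of degenerate cases (empty intersection, endpoints of $P$, and the possibility that $u$ or $v$ itself lies on $P$), together with a brief remark that materializing imaginary $x_\alpha$ vertices prior to folding does not alter distances in $T$ between existing vertices; both are routine once notation is fixed, and the real content of the argument is the observation that the upper half of $T$ survives the folding verbatim.
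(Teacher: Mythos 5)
Your proof is correct, and it takes a genuinely different route from the paper's. The paper argues indirectly: it replaces identification by zero-weight edges between $x_\alpha$ and $x_{1-\alpha}$, notes this can only contract distances, and then asserts that these shortcut edges cannot improve $d(x_{\alpha_1},x_{\alpha_2})$ when $\alpha_1,\alpha_2$ lie on the same side of $1/2$, because $P$ is a shortest path (this requires the reader to check that the competing route $|\alpha_1-\beta|+|1-\beta-\alpha_2|\ge|\alpha_1-\alpha_2|$ fails to win, and that the shortest $u$-$v$ path in the augmented graph still funnels through $x_{\alpha_1}$ and $x_{\alpha_2}$). You instead construct the half-tree $T^+$ (all branches hanging off the upper half of $P$ together with the upper half itself), show the folding quotient is injective on $T^+$ and preserves edge weights — using the symmetry $|\alpha-\alpha'|=|(1-\alpha)-(1-\alpha')|$ of the merged parallel $P$-edges — and conclude that $T^+$ maps to a subtree of $T'$ carrying $P_{uv}$ verbatim; uniqueness of paths in the tree $T'$ then forces $d_{T'}(u,v)=d_T(u,v)$. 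Your approach is more constructive and makes the nontrivial direction (that distances cannot shrink) transparent, at the cost of some bookkeeping around degenerate intersections and the materialization of imaginary subdivision vertices, both of which you correctly flag as routine. The paper's argument is shorter but leaves the key inequality implicit; yours also yields the slightly stronger structural fact that each half of $T$ survives the folding as an isometrically embedded subtree.
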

\begin{proof}
We first observe that identifying vertices is equivalent (distance-wise) to adding a zero-weight edge between them. So if $G'$ is the graph obtained from $T$ by adding zero weight edges between every vertex $x_\alpha$ to $x_{1-\alpha}$, it follows that $d_{T'}(y,z)= d_{G'}(y,z)\le d_T(y,z)$ for every $y,z\in V$. Clearly, if $P_{uv}$ does not intersect $P$ then $d_{T'}(y,z)=  d_T(y,z)$. Otherwise, let $\alpha_1\le\alpha_2\le 1$ be the first (resp., last) index of a vertex of $P_{uv}$ on $P$. By our assumption, either both $\alpha_1,\alpha_2\le 1/2$ or both $\alpha_1,\alpha_2\ge 1/2$.
The main observation is that the zero-weight edges added to $T$ will not shorten the distance between $x_{\alpha_1}$ to $x_{\alpha_2}$ (i.e. $d_T(x_{\alpha_1},x_{\alpha_2})=d_{T'}(x_{\alpha_1},x_{\alpha_2})$), this is because the folded path is a shortest path. The claim follows.
\end{proof}

\begin{corollary}\label{cor:maintain-distance}
If $T$ is a tree and $\hat{T}$ a $K$-folding of $T$ that uses $k-1$ folding points $X=\{x^{(1)},\dots,x^{(k-1)}\}$, then for any pair $u,v\in T$ that do not cross any point in $X$, we have $d_{\hat{T}}(u,v)=d_T(u,v)$.
\end{corollary}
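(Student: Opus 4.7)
The plan is to argue by induction on the number of foldings. Write $T_0=T$, and for $j=1,\dots,k-1$ let $T_j$ be the tree obtained from $T_{j-1}$ by folding the $j$-th shortest path $P_j$, whose midpoint is the designated folding point $x^{(j)}$. By \claimref{claim:remain-tree} each $T_j$ is a tree, and $T_{k-1}=\hat T$. The induction hypothesis is: for every $0\le j\le k-1$ and every pair $u,v\in T$ that does not cross any of $x^{(1)},\dots,x^{(j)}$, we have $d_{T_j}(u,v)=d_T(u,v)$, and moreover the unique $T_j$-path $P_{uv}^{T_j}$ is obtained from the unique $T$-path $P_{uv}^T$ by applying the identifications induced by the first $j$ folds.

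The base case $j=0$ is trivial. For the inductive step, observe that $x^{(j)}$ is, by definition, the midpoint of the shortest path $P_j$ in the tree $T_{j-1}$, so the preceding Claim applies to the single fold $T_{j-1}\to T_j$. Since by assumption the pair $(u,v)$ does not cross $x^{(j)}$, that Claim yields $d_{T_j}(u,v)=d_{T_{j-1}}(u,v)$, and combining with the inductive hypothesis for $j-1$ gives $d_{T_j}(u,v)=d_T(u,v)$. The path-structure clause is inherited: the proof of the Claim shows that a preserved pair realises its $T_j$-distance along the same sequence of edges used in $T_{j-1}$ (no zero-weight shortcut edge is traversed), so $P_{uv}^{T_j}$ is $P_{uv}^{T_{j-1}}$ up to the $j$-th identification, hence $P_{uv}^T$ up to all $j$ identifications. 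Taking $j=k-1$ delivers the corollary.

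The only conceptual point that requires care is the meaning of ``crosses $x^{(i)}$'' when $(u,v)$ are original vertices of $T$ but $x^{(i)}$ is a (possibly imaginary) midpoint living in the intermediate tree $T_{i-1}$. The path-structure clause in the induction hypothesis is precisely what makes this bookkeeping harmless: once we know the pair has not crossed any earlier folding point, the path $P_{uv}^{T_{i-1}}$ in which the crossing condition with respect to $x^{(i)}$ is evaluated is literally $P_{uv}^T$ after the first $i-1$ identifications. Thus the hypothesis ``$(u,v)$ does not cross any point of $X$'' can be read consistently across the sequence $T_0,T_1,\dots,T_{k-1}$, and the induction goes through. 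This coherence of the crossing condition is the main (and essentially the only) obstacle; the distance-preservation itself is an immediate iteration of the single-fold Claim.
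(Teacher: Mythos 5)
Your proof is correct and follows the route the paper leaves implicit: the paper states the corollary without a written proof, regarding it as an immediate iteration of the single-fold claim. Your careful treatment — in particular the path-structure clause in the induction hypothesis that makes the ``crossing'' condition coherent as one passes through the intermediate trees $T_0,T_1,\dots,T_{k-1}$ — supplies precisely the bookkeeping the paper elides, and it is exactly the argument the paper has in mind.
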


\subsection{Prioritized Embedding}

We now define an isometric embedding of $T$ into $\ell_\infty$ with prioritized dimension $O(\log j)$. That is, given a ranking $(v_1,\dots,v_n)$ of the vertices of $T$, vertex $v_j$ will be nonzero in at most $O(\log j)$ coordinates. The main technical lemma provides, for a given set of terminals $K\subseteq V$, two basic constructions: 1) A {\em folding} of the tree $T$ along $|K|-1$ paths, so that in the tree $\hat{T}$ created by these foldings, all terminals are identified as a single vertex. 2) An embedding which is isometric on every pair whose shortest path in $T$ crosses a folding point.

We will use the following well-known claim (see, e.g., \cite{LLR95}).
\begin{claim}\label{claim:tree-separator}
Let $T=(V,E)$ be a tree, and $K\subseteq V$ a set of vertices. Then there exist subtrees $T_1$
and $T_2$ of $T$ that share a single vertex and together cover $T$, such that each of $T_1$, $T_2$ contain at most $\lceil 2|K|/3\rceil$ vertices of $K$.
\end{claim}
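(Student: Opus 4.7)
The plan is to find one vertex $v^*\in V(T)$ to serve as the shared vertex of $T_1$ and $T_2$, and then partition the connected components of $T\setminus\{v^*\}$ into two groups whose $K$-counts---after appending $v^*$ to each side---are both at most $\lceil 2k/3\rceil$, where $k:=|K|$. I will take $v^*$ to be the weighted centroid of $T$ with respect to $K$: rooting $T$ at an arbitrary vertex $r$ and writing $s(v):=|K\cap T_v|$ for the number of terminals in the subtree rooted at $v$, I start at $r$ and descend to any child $u$ satisfying $s(u)>k/2$ while such a child exists. The walk terminates at a vertex $v^*$ whose children all have $s(u)\le\lfloor k/2\rfloor$, and (if $v^*\ne r$) whose ``upward'' component has $k-s(v^*)<k/2$ terminals, so every component of $T\setminus\{v^*\}$ contains at most $\lfloor k/2\rfloor$ terminals.

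Next I enumerate those components as $C_1,\ldots,C_d$ with $k_i:=|K\cap C_i|$ in non-increasing order (so $\sum_i k_i=k-[v^*\in K]$) and split into two subcases. If $k_1\ge\lceil k/3\rceil$, I put $C_1$ alone in $G_1$ and the rest in $G_2$; then $|K\cap T_1|=k_1+[v^*\in K]\le\lfloor k/2\rfloor+1\le\lceil 2k/3\rceil$ and $|K\cap T_2|=k-k_1\le\lfloor 2k/3\rfloor$, with the small cases $k\le 3$ checked directly. If instead every $k_i<\lceil k/3\rceil$, I partition greedily by the Longest-Processing-Time rule: scan $C_1,C_2,\ldots$ in order and append each to whichever of $G_1,G_2$ currently has the smaller total $K$-count. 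The main technical claim is that both resulting groups end with weight at most $\lceil 2k/3\rceil-[v^*\in K]$. If not, say $G_1$ exceeds this bound, then the total weight forces $\sum_{G_2}<\lfloor k/3\rfloor$, and inspecting the step $j$ at which $G_1$ first crossed the bound forces $k_j>k/3$; the non-increasing order then gives $k_1,\ldots,k_j>k/3$, so $jk/3<\sum_i k_i\le k$, forcing $j\le 2$. But $j=1$ is impossible since $k_1<\lceil k/3\rceil$ in this subcase, and $j=2$ is impossible since after $C_1\in G_1$ the strictly lighter side is $G_2$, forcing $C_2\in G_2$.

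Finally I set $T_\ell:=\{v^*\}\cup\bigcup_{C\in G_\ell}C$ for $\ell=1,2$. These are subtrees of $T$ (each $C$ is connected and directly joined to $v^*$ in $T$), they share exactly the vertex $v^*$, their union is $T$, and $|K\cap T_\ell|\le\lceil 2k/3\rceil$ follows from the partition bound together with the $[v^*\in K]$ correction. I expect the main obstacle to be the off-by-one arithmetic when $v^*$ itself is a terminal; that is what forces the careful ceiling/floor bookkeeping in both subcases, and it is the reason the bound $\lceil 2k/3\rceil$ (rather than $2k/3$) appears in the statement.
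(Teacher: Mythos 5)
The paper does not actually prove this claim --- it invokes it as folklore (``see, e.g., \cite{LLR95}''), so there is no in-paper argument to compare against. Your proof is a correct, self-contained derivation, and it follows the standard route one finds in that literature: locate a $K$-weighted centroid $v^*$ so that every component of $T\setminus\{v^*\}$ carries at most $\lfloor k/2\rfloor$ terminals, then split the components into two groups whose $K$-weights (after crediting $v^*$ to both sides) stay below $\lceil 2k/3\rceil$.

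Two small remarks on the greedy bin-packing step (your Case~B), where the integer bookkeeping is doing real work. First, the inequality you summarize as ``forces $k_j>k/3$'' has to absorb the $[v^*\in K]$ correction, so it is worth writing out: setting $w=[v^*\in K]$ and supposing $\mathrm{weight}(G_1)>\lceil 2k/3\rceil-w$, integrality gives $\mathrm{weight}(G_1)\ge\lceil 2k/3\rceil-w+1$ and hence $\mathrm{weight}(G_2)\le\lfloor k/3\rfloor-1$; since $G_1$'s weight just before adding $C_j$ is at most $\mathrm{weight}(G_2)$ at that moment (which only grows), one gets
\[
k_j \;\ge\; \bigl(\lceil 2k/3\rceil - w + 1\bigr) - \bigl(\lfloor k/3\rfloor - 1\bigr) \;=\; \lceil 2k/3\rceil - \lfloor k/3\rfloor - w + 2 \;\ge\; \lceil k/3\rceil + 1,
\]
using $\lceil 2k/3\rceil-\lfloor k/3\rfloor\ge\lceil k/3\rceil$ (equivalently $\lfloor 2k/3\rfloor\ge 2\lfloor k/3\rfloor$) and $w\le 1$. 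Second, this last bound already contradicts the Case~B hypothesis $k_j\le\lceil k/3\rceil-1$ outright, so the subsequent detour through ``$j\le 2$, rule out $j=1$ and $j=2$'' is unnecessary and can be deleted. Apart from those presentational points, the argument is sound: the centroid descent terminates with the stated $\lfloor k/2\rfloor$ bound on both the child subtrees and the upward component, the Case~A arithmetic $\lfloor k/2\rfloor+1\le\lceil 2k/3\rceil$ checks out for all $k\ge 1$, and the two pieces $T_\ell=\{v^*\}\cup\bigcup_{C\in G_\ell}C$ are indeed subtrees sharing exactly $v^*$ and covering $T$.
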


\begin{lemma}\label{lem:embed}
Let $T=(V,E)$ be a tree, and $K\subseteq V$ a set of $k$ terminals. Then there exists a non-expanding embedding $f:T\to\ell_\infty^{O(\log |K|)}$ and a tree $\hat{T}$ which is a $K$-folding of $T$, so that for every $u,v\in V$ at least one of the following holds:
\begin{enumerate}
\item $\|f(u)-f(v)\|_\infty = d_T(u,v)$.
\item $d_{\hat{T}}(u,v)= d_T(u,v)$.
\end{enumerate}
\end{lemma}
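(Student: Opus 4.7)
The plan is to prove this by induction on $k=|K|$. For the base case $k\leq 1$, take $\hat{T}=T$ and let $f$ be the zero-dimensional embedding, so that condition 2 holds trivially. For $k\geq 2$, invoke \claimref{claim:tree-separator} to obtain subtrees $T_1,T_2$ of $T$ sharing only a vertex $s$, with $|K\cap T_i|\leq\lceil 2k/3\rceil$. Set $K_i=K\cap T_i$ (assigning $s$ to $K_1$ if $s\in K$, so that $K_1\cup K_2=K$), and apply the inductive hypothesis to obtain non-expanding embeddings $f_i\colon T_i\to\ell_\infty^{d_i}$ with $d_i\leq C\log|K_i|$ and $K_i$-foldings $\hat{T}_i$ of $T_i$. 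Translating each $f_i$ by the constant vector $f_i(s)$ (which preserves $\ell_\infty$ distances) ensures $f_i(s)=0$.

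The embedding $f$ is built by prepending one new coordinate. Let $f_0(v)=d_T(v,s)$ for $v\in T_1$ and $f_0(v)=-d_T(v,s)$ for $v\in T_2$ (consistent at $s$), and let $g(v)$ be $f_i(v)$ (padded with zeros to the common dimension $\max(d_1,d_2)$) when $v\in T_i$; the two definitions of $g$ agree at $s$ by normalization. Set $f(v)=(f_0(v),g(v))$, which is non-expanding and has dimension $1+\max(d_1,d_2)\leq C\log k$ for $C$ large enough. The $K$-folding $\hat{T}$ is constructed in three stages: first, perform the sequence of $K_1$-folds producing $\hat{T}_1$ from $T_1$ (these leave $T_2$ untouched); second, the $K_2$-folds producing $\hat{T}_2$ from $T_2$, yielding the intermediate tree $T^{(2)}=\hat{T}_1\cup_s\hat{T}_2$; third, a single cross-fold along the shortest path in $T^{(2)}$ between the merged terminal $t_1^*$ of $K_1$ in $\hat{T}_1$ and the merged terminal $t_2^*$ of $K_2$ in $\hat{T}_2$. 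Each individual fold lies between two currently unidentified $K$-vertices, so the composite is a valid $K$-folding.

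It remains to verify that each pair $u,v$ satisfies condition 1 or condition 2. When $u=s$ (or $v=s$), or when $u\in T_1\setminus\{s\}$ and $v\in T_2\setminus\{s\}$, the definition of $f_0$ together with the separating role of $s$ gives $|f_0(u)-f_0(v)|=d_T(u,v)$, so condition 1 holds. Otherwise $u,v\in T_i\setminus\{s\}$ for some $i$, say $i=1$. The inductive hypothesis supplies one of two alternatives: either $\|f_1(u)-f_1(v)\|_\infty=d_T(u,v)$, which transfers to $\|f(u)-f(v)\|_\infty$ (condition 1), or $d_{\hat{T}_1}(u,v)=d_T(u,v)$. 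In the latter case the unique $u$-$v$ path in $T^{(2)}$ lies in $\hat{T}_1$ and has length $d_{\hat{T}_1}(u,v)=d_T(u,v)$, and by \corollaryref{cor:maintain-distance} this length is preserved by the final cross-fold provided the path does not cross the cross-fold midpoint $x^*$.

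The main obstacle is the sub-case in which $x^*$ lies in the $\hat{T}_1$ portion of the cross-fold path and $P_{uv}^{T^{(2)}}$ passes through $x^*$. I expect to resolve this by choosing the cross-fold endpoints $t_1\in K_1$ and $t_2\in K_2$ so that $|d_{\hat{T}_1}(t_1^*,s)-d_{\hat{T}_2}(s,t_2^*)|$ is minimized, placing $x^*$ as close to $s$ as possible; together with the recursive structure (in particular the inductive fact that the earlier $T_1$-folds already left $P_{uv}^{T_1}$ intact for pairs in condition 2), this forces any problematic pair to actually satisfy condition 1 via $f_1$, contradicting that we are in the condition 2 branch. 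If this structural argument is insufficient, it is enough to add one extra coordinate $f_0'(v)=\pm d_T(v,x^*)$ splitting the vertices by their side of $x^*$ in $T$: this directly captures every pair whose $T$-path passes through $x^*$ under condition 1, at the cost of only an additive constant in the dimension, so the $O(\log k)$ bound is preserved.
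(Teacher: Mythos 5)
Your approach differs from the paper's in a key structural choice: you run the separator recursion directly on the terminal set $K$ and build the $K$-folding $\hat T$ alongside the embedding, stitching the two recursive foldings together with a final cross-fold. The paper instead first fixes an arbitrary $K$-folding of $T$, extracts the $k-1$ folding points $X$, and then runs the separator recursion (\claimref{claim:tree-separator}) on $T$ with terminal set $X$ rather than $K$, proving by induction on $b=|H\cap X|$ that crossing pairs are preserved isometrically. Because the folding is fixed once and for all before the recursion begins, no cross-fold ever appears: every ``crossing'' pair is ultimately handled by a base case with a single folding point, and the two halves are joined only by the Lipschitz separator coordinate $h$.

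The cross-fold you introduce is exactly where your proposal has a genuine gap. You correctly identify the problematic sub-case, but neither proposed resolution works as stated. Option (a) is unsubstantiated: the dichotomy supplied by the inductive hypothesis for $f_1$ concerns crossings of $K_1$-folding points inside $T_1$, which has no a priori relation to whether the $\hat T_1$-path of a condition-2 pair crosses the new point $x^*$; placing $x^*$ close to $s$ does not force such pairs into the condition-1 branch. Option (b) is ill-posed: $x^*$ is a vertex of the partially folded tree $T^{(2)}=\hat T_1\cup_s\hat T_2$, and in general corresponds to several identified vertices of $T$, so $d_T(v,x^*)$ and ``the side of $x^*$ in $T$'' are not defined. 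The idea can probably be repaired by setting $f_0'(v)=\pm d_{T^{(2)}}(v,x^*)$ with sign determined by the branch of $T^{(2)}\setminus\{x^*\}$ containing $v$ (non-expansion then follows from $d_{T^{(2)}}\le d_T$, and isometry for the problematic pairs from the recursive condition $d_{T^{(2)}}(u,v)=d_T(u,v)$), but that restatement and its verification are missing. Two smaller issues: for $k=2$, \claimref{claim:tree-separator} gives $\lceil 4/3\rceil=2$ and does not split $K$, so --- as the paper does for $b=2$ --- you must insert a vertex on the path between the two terminals and use it as $s$; and you should say what happens when $K_2=\emptyset$, where no cross-fold is performed.
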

\begin{proof}
Let $X$ be the set of $k-1$ folding points in some fixed $K$-folding of $T$
.
By \corollaryref{cor:maintain-distance} it suffices to show a non-expansive embedding $f:T\to\ell_\infty^{O(\log k)}$ that preserves exactly distances between pairs $u,v\in V$ whose shortest path crosses a folding point in $X$.
We assume w.l.o.g. that the path in $T$ between any two distinct points $x,y\in X$ contains a vertex in its interior (if it does not, add an imaginary vertex on the edge $(x,y)$).
Consider the $2(k-1)$ half paths $P_1^-,P_1^+,\dots,P_{k-1}^-,P_{k-1}^+$ in the $K$-folding of $T$. We prove by induction on $b$, that if $H$ is a subtree of $T$ with $b=|H\cap X|$, then it admits a non-expansive embedding into $\R^{4(\log b+1)}$ with distortion 1 (in the $\ell_\infty$ norm) for all pairs whose path in $H$ crosses (at least) one of these $b$ vertices. Clearly a path in $H$ cannot cross a vertex which is not in $H$.

The base case is when $b=1$ (when $b=0$ we can just map all vertices to $0$). Let $x\in X$ be the single point in $H\cap X$, and let $P^-$ and $P^+$ be the two parts of the path $P$ for which $x$ is the middle point. We note that for any vertex $u\in H$, the unique path from $x$ to $u$ can intersect at most one of $P^-$ or $P^+$ (since both of these paths emanate from $x$).
Let $R\subseteq H$ be the set of vertices $u\in H$ such that the path from $x$ to $u$ intersects $P^-$, and define the embedding $f:H\to\R$ by
\[
f(u)=\left\{\begin{array}{ccc} d_H(x,u) & u\in R\\
-d_H(x,u) & \text{otherwise} \end{array} \right.
\]
We first observe that if $u\in R$ and $v\notin R$, then the path between $u,v$ must contain $x$. This is because the very first edge $e$ on the path from $x$ to $u$ is on $P^-$, so if the path from $u$ to $v$ does not pass through $x$, then the path from $x$ to $v$ would contain $e$, a contradiction to the fact that $v\notin R$. This observation implies that $f$ is non-expansive: for pairs with both $u,v\in R$ or both $u,v\notin R$ by the triangle inequality, and for pairs $u\in R$, $v\notin R$ we have $d_H(u,v)=d_H(u,x)+d_H(x,v)=|f(u)-f(v)|$. The latter proves that pairs $u,v$ whose path crosses $x$ (recall the path must intersect both $P^-$ and $P^+$) attain distortion 1, as required.

For the induction step, let $H_1$ and $H_2$ be the subtrees of $H$ obtained by applying \claimref{claim:tree-separator} on the tree $H$ with respect to the terminal set $H\cap X$ (we stress that this terminal set is not the given terminal set $K$), and let $s\in H$ be the unique vertex common to both subtrees.
Let $b_1=|X\cap H_1|$ and $b_2=|X\cap H_2|$, by \claimref{claim:tree-separator} we have that both $b_1,b_2\le b'=\lceil 2b/3\rceil$. (Note that in the case $b=2$ we have $\lceil 2b/3\rceil=2$ as well, but since we ensured that there is at least one vertex on the path between any two points of $X$, this vertex can serve as a separator, and each $H_1,H_2$ will contain only one point of $X$.) Let $f_1:H_1\to\R^{4\log b'+1}$ and $f_2:H_2\to\R^{4\log b'+1}$ be the embeddings guaranteed by the induction hypothesis. We may shift these embeddings so that $f_1(s)=f_2(s)=0$, without affecting the distortion. Note that for $b>2$ we have that $4\log \lceil 2b/3\rceil +1\le 4\log b$ (and for $b=2$, we have that both $H_1,H_2$ contain a single vertex of $X$, so by the base case, $f_1$ and $f_2$ are embeddings into $\R$). Thus we can define $f:H\to\R^{4\log b+1}$ by adding a single coordinate $h:H\to\R$ with
\[
h(u)=\left\{\begin{array}{ccc} d_H(s,u) & u\in H_1\\
-d_H(s,u) & u\in H_2 \end{array} \right.
\]
Define the embedding for $u\in H_i$, $i\in \{1,2\}$, by $f(u)=f_i(u)\oplus h(u)$. We claim that $f$ in indeed non-expansive. To see this, consider first a pair $u,v\in H_i$, then by the inductive hypothesis $f_i$ is non-expansive, and the additional coordinate $h$ is also non-expansive by the triangle inequality. Consider now a pair $u\in H_1$ and $v\in H_2$. As before we have that $d_H(u,v)=d_{H_1}(u,s)+d_{H_2}(s,v)$. Since $s$ belongs to both trees and $f_1(s)=f_2(s)=0$, it follows that
\begin{eqnarray*}
\|f_1(u)-f_2(v)\|_\infty &=& \|f_1(u)-f_1(s)+f_2(s)-f_2(v)\|_\infty\\ &\le&\|f_1(u)-f_1(s)\|_\infty+\|+f_2(s)-f_2(v)\|_\infty\\&\le& d_{H_1}(u,s)+d_{H_2}(s,v)=d_H(u,v)~.
\end{eqnarray*}
For the additional coordinate $h$ we have $|h(u)-h(v)|=d_H(u,s)+d_H(s,v)=d_H(u,v)$, so $f$ is indeed an isometry for such $u,v$. By the induction hypothesis, every pair $u,v\in H_i$, $i\in \{1,2\}$, for which the path from $u$ to $v$ crosses some $x\in H\cap X$, has distortion 1 by $f_i$, and thus also by $f$.
\end{proof}

\begin{theorem}
Let $T$ be a tree on $n$ vertices with priority ordering $(v_1,\dots,v_n)$. Then there is an isometric embedding of $T$ into $\ell_\infty$ with prioritized dimension $O(\log j)$.
\end{theorem}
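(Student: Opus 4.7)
The plan is to iterate \lemmaref{lem:embed} at $L=O(\log\log n)$ doubly-exponentially growing priority scales and concatenate the resulting embeddings, using the $K$-folding output of each iteration as the input tree for the next one. Concretely, I set $a_l:=2^{2^{l-1}}$ for $l\ge 1$, take $L$ to be the smallest index with $a_L\ge n$ (so $L=O(\log\log n)$), and put $L_l:=\{v_1,\dots,v_{\min(a_l,n)}\}$. Starting from $T^{(0)}:=T$, for $l=1,\dots,L$ I would apply \lemmaref{lem:embed} to $T^{(l-1)}$ with terminal set $K^{(l)}$ equal to the image of $L_l$ in $T^{(l-1)}$, obtaining a non-expanding embedding $f^{(l)}:T^{(l-1)}\to\ell_\infty^{d_l}$ and a $K^{(l)}$-folded tree $T^{(l)}$ in which all of $L_l$ is collapsed to a single vertex $\tilde v_l$. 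Since the previous iteration has already collapsed $L_{l-1}$ to one vertex, $K^{(l)}$ has at most $a_l-a_{l-1}+1\le a_l$ distinct vertices, so $d_l=O(\log a_l)=O(2^{l-1})$ and the total dimension is $\sum_{l=1}^L d_l=O(2^L)=O(\log n)$, matching the classical bound.

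Next, using translation invariance of $\ell_\infty$ (which preserves every guarantee of the lemma), I shift each $f^{(l)}$ so that $\tilde v_{l-1}$ (the image of $L_{l-1}$ in $T^{(l-1)}$) is mapped to $0$, and define
\[
F(u):=\bigl(f^{(1)}(u^{(0)}),\,f^{(2)}(u^{(1)}),\,\dots,\,f^{(L)}(u^{(L-1)})\bigr),
\]
where $u^{(l)}$ denotes the image of $u$ in $T^{(l)}$. For the prioritized dimension, let $l_j:=\min\{l:a_l\ge j\}$; whenever $l>l_j$ we have $v_j\in L_{l_j}\subseteq L_{l-1}$, so $v_j^{(l-1)}=\tilde v_{l-1}$ and the shift forces $f^{(l)}(v_j^{(l-1)})=0$. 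Thus all nonzero entries of $F(v_j)$ lie among the first $\sum_{l=1}^{l_j}d_l=O(2^{l_j})=O(\log j)$ coordinates, since $a_{l_j}\ge j$ gives $2^{l_j-1}\ge\log j$ while (for $l_j\ge 2$) $a_{l_j-1}<j$ gives $2^{l_j-2}<\log j$.

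It remains to show $F$ is an isometry. Non-expansion is immediate: each $f^{(l)}$ is non-expanding and folding never increases distances, so every coordinate block is bounded by $d_{T^{(l-1)}}(u^{(l-1)},v^{(l-1)})\le d_T(u,v)$. For the matching lower bound, fix $u\ne v$ and set $d^{(l)}:=d_{T^{(l)}}(u^{(l)},v^{(l)})$, so that $d^{(0)}=d_T(u,v)>0$ and $d^{(L)}=0$ (because $L_L=V$ is collapsed to one vertex in $T^{(L)}$). Let $l^*$ be the smallest index with $d^{(l^*)}<d^{(0)}$; applying \lemmaref{lem:embed} at level $l^*$, alternative~(2) would imply $d^{(l^*)}=d^{(l^*-1)}=d^{(0)}$ by minimality of $l^*$, a contradiction, so alternative~(1) must hold, yielding $\|f^{(l^*)}(u^{(l^*-1)})-f^{(l^*)}(v^{(l^*-1)})\|_\infty=d^{(l^*-1)}=d^{(0)}=d_T(u,v)$, and hence $\|F(u)-F(v)\|_\infty\ge d_T(u,v)$. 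The main obstacle the plan has to sidestep is the naive choice $a_l=2^l$, which would only yield $\sum_{l\le l_j}d_l=O(\log^2 j)$; the doubly-exponential schedule makes the contributions $d_l$ geometrically increasing in $l$, so their partial sums telescope down to $O(\log j)$.
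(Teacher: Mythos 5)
Your proposal is correct and follows essentially the same route as the paper: a doubly-exponential schedule of terminal sets, iterated application of the folding lemma with the previously-collapsed vertex carried along (you do this implicitly via cumulative sets $L_l$, the paper explicitly via $K_i\cup\{z\}$), shifts so the collapsed vertex maps to zero, and the same "first index where the folded distance drops" argument for isometry. The only (cosmetic) difference is your being slightly more careful to note that only the images of $V(T)$ collapse to a single vertex in $T^{(L)}$ (rather than the whole tree being a single vertex), which is in fact the precise statement needed.
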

\begin{proof}

We partition the vertices of $T$ into $\lceil\log\log n\rceil$ sets $K_1,\dots,K_{\lceil\log\log n\rceil}$ by taking $K_i=\{v_j~:~ 2^{2^{i-1}}<j\le 2^{2^i}\}$ (put $v_1,v_2$ into the first set, and the last set contains vertices only until $v_n$). Initially, let $T_1=T$, $z=v_1$, and in the $i$-th step, apply the embedding of \lemmaref{lem:embed} on the tree $T_i$ with the set $K_i\cup\{z\}$. Define $T_{i+1}=\hat{T}_i$ the tree of the next iteration as the $K_i\cup\{z\}$-folding of $T_i$. Let $f_i:T_i\to\R^{O(2^i)}$ be the embedding of $T_i$, we may assume by an appropriate shift that $f_i(z)=0$, note that $z\in T_i$ is the unique vertex corresponding to the terminals $v_1,\dots v_{2^{2^{i-1}}}$. The embedding is defined as $f=\bigoplus_if_i$.

\paragraph{Isometry.} To see that $f$ is an isometry, fix any $u,v\in T$. Since $T_{\lceil\log\log n\rceil+1}$ is a single vertex, surely there is an index $i$ such that $d_{T_i}(u,v)\neq d_{T_{i+1}}(u,v)$, and by the first property of \lemmaref{lem:embed} we get that $\|f_i(u)-f_i(v)\|_\infty= d_T(u,v)$ for the first such index $i$. Note that each $f_j$ is non-expanding, therefore so is $f$, and the isometry follows.

\paragraph{Prioritized dimension.} Recall that the embedding $f_i$ applied with the terminal set $K_i\cup\{z\}$ has $O(\log|K_i|)=O(2^i)$ coordinates. Consider the vertex with priority $j$, $v_j\in T$, and let $i$ be such that $v_j\in K_i$. We have that $i\le\log\log j + 1$. Observe that for every $i'>i$, since $v_j\in K_i$ we have $f_{i'}(v)=f_{i'}(z)=0$. In particular, the total number of nonzero coordinates for $v_j$ in $f$ is at most
\[
\sum_{i'=0}^iO(2^{i'}) = O(2^i) = O(\log j)~.
\]

\end{proof}

\section{Prioritized Embeddings of Arbitrary Metrics into $\ell_\infty$}\label{sec:mat-prior}

In this section we devise prioritized counterparts of the embeddings of Matousek \cite{Mat96b} into $\ell_\infty$, and prioritized the dimension and distortion of these embeddings. Matousek showed that for any $k\ge 1$, any $n$-point metric embeds into $\ell_\infty$ with distortion $2k-1$ and dimension $O(k\cdot n^{1/k}\cdot\log n)$. We will show an adaptation of this embedding that prioritizes the distortion, and a more involved construction that prioritizes the dimension (and slightly the distortion).

We note that at the cost of increasing the distortion by a constant factor, one can use the framework introduced in \cite{EFN15} to obtain prioritized distortion $4\left\lceil\frac{k\cdot\log j}{\log n}\right\rceil-1$ and dimension $O(k^2\cdot n^{1/k}\cdot\log n)$. This is achieved by defining $K_i=\{x_j~:~j\le n^{i/k}\}$, and applying as a black-box the embedding of Matousek on each $K_i$ with distortion $2i-1$ and dimension $O(k\cdot (n^{i/k})^{1/i}\cdot\log n^{i/k})\le O(k\cdot n^{1/k}\cdot\log n)$ for any $x_j\in K_i$. Then extend the embedding of each $K_i$ to the entire set $X$, obtaining distortion $4i-1\le 4\left\lceil\frac{k\cdot\log j}{\log n}\right\rceil-1$ for pairs containing $x_j\in K_i$.
Alternatively, one may use the equivalence between coarse scaling distortion and prioritized distortion \cite{BFN16}, and a result from \cite{ABN06}, to derive prioritized distortion $O\left(\left\lceil\frac{k\cdot\log j}{\log n}\right\rceil\right)$ with improved dimension $O(n^{1/k}\cdot\log n)$, but the leading constant in the distortion is large. However, we would like to obtain an lossless prioritized result, that does not increase the worst case distortion at all. Moreover, such techniques cannot provide prioritized dimension.

\subsection{Prioritized Distortion}\label{sec:prior-dist}

Let $(X,d)$ be a metric space with $|X|=n$, and let $(x_1,\dots,x_n)$ be the given priority. Let $k\ge 1$ be an integer parameter. For $1\le i\le k$ define $S_i=\{x_j~:~ n^{(i-1)/k}< j\le n^{i/k}\}$ (and put $x_1$ in $S_1$).

\paragraph{Construction.} We introduce many "copies" of high priority points, specifically, each point in $S_i$ will have $(2^k\cdot n)^{1-i/k}$ copies. This gives us a (semi)-metric $(X',d')$, by defining the distance between a point to all its copies as 0. Since there are at most $n^{i/k}$ points in $S_i$, the total number of points in $X'$, denoted $N=|X'|$, is at most
\begin{equation}\label{eq:N}
N\le \sum_{i=1}^k(2^k\cdot n)^{1-i/k}\cdot n^{i/k}= 2^k\cdot n\cdot\sum_{i=1}^k 2^{-i}\le 2^k\cdot n~.
\end{equation}

For each $1\le i\le k$ we sample $m=c\cdot N^{1/k}\cdot\ln N$ sets $A_1^{(i)},\dots, A_m^{(i)}$, where $c$ is a constant to be determined later. For every $1\le h\le m$, each element of $X'$ is sampled to $A_h^{(i)}$ independently with probability $N^{-i/k}$. The embedding $f:X\to \R^{km}$ is defined by
\[
f(x)=\bigoplus_{i=1}^k\bigoplus_{h=1}^md(x,A_h^{(i)})~.
\]

\paragraph{Analysis.}
By the triangle inequality, we have that the embedding is non-expansive in each coordinate, thus in $\ell_\infty$ it is non-expansive. It remains to show that, with high probability, for every $1\le j\le n$, any pair containing $x_j$ has distortion at most $2\left\lceil\frac{k\cdot\log j}{\log n}\right\rceil-1$. Fix such a pair $x_j,y\in X$. Let $i$ be such that $x_j\in S_i$, and define $r=\frac{d(x_j,y)}{2i-1}$. Consider the balls $B_0=B(x_j,0)$, $B_1=B^o(y,r)$, $B_2=B(x_j,2r)$, and so on until $B_i$. The balls of even index are closed and centered at $x_j$, while those of odd index are open and centered at $y$. The radius of $B_b$, $0\le b\le i$, is $b\cdot r$. By the definition of $r$, $B_{b-1}\cap B_b=\emptyset$.

Let $1\le b\le i$ be an integer such that $|B_{b-1}|\ge N^{1-(i-b+1)/k}$ and $|B_b|\le N^{1-(i-b)/k}$. Such a $b$ must exist since $x_j\in S_i$ has $(2^k\cdot n)^{1-i/k}$ copies, and
\[
|B_0|= (2^k\cdot n)^{1-i/k}\stackrel{\eqref{eq:N}}{\ge} N^{1-i/k}~,
\]
so if no such $b$ exists, it follows that $|B_1|> N^{1-(i-1)/k}$, and so on until  $|B_i|>N^{1-(i-i)/k}=N$, a contradiction.

For each $1\le h\le m$, let ${\cal E}_h^{hit}$ be the event that $B_{b-1}\cap A_h^{(k-i+b)}\neq\emptyset$, and let ${\cal E}_h^{miss}$ be the event that $B_{b}\cap A_h^{(k-i+b)}=\emptyset$. Since the balls are disjoint and each point joins the $\{A_h^{(i)}\}$ sets independently, these events are independent.  We calculate:
\[
\Pr[{\cal E}_h^{hit}]=1-\left(1-N^{-(k-i+b)/k}\right)^{|B_{b-1}|}\ge 1-e^{-N^{-(k-i+b)/k}\cdot N^{1-(i-b+1)/k}}\ge N^{-1/k}/2~.
\]
\[
\Pr[{\cal E}_h^{miss}]=\left(1-N^{-(k-i+b)/k}\right)^{|B_b|}\ge \left(1-N^{-1+(i-b)/k}\right)^{N^{1-(i-b)/k}}\ge 1/4~.
\]
Thus, the probability that there is no $1\le h\le m$ for which both events occur, is bounded by
\[
\left(1-N^{-1/k}/8\right)^m\le e^{-N^{-1/k}/8\cdot c\cdot N^{1/k}\ln N}=1/N^2~,
\]
whenever $c=16$. If there is an $1\le h\le m$ such that both events occur, then for odd $b$ we have $d(x_j,A_h^{(k-i+b)})\le (b-1)r$ and $d(y,A_h^{(k-i+b)})\ge br$ (for even $b$ replace the roles of $x_j,y$). It follows that
\[
|d(y,A_h^{(k-i+b)})-d(x_j,A_h^{(k-i+b)})|\ge br-(b-1)r = r =\frac{d(x_j,y)}{2i-1}~.
\]
Recall that $x_j\in S_i$ means that $n^{(i-1)/k}< j\le n^{i/k}$, or equivalently, $i-1< k\cdot\frac{\log j}{\log n}\le i$. So the prioritized distortion is indeed at most $2\left\lceil\frac{k\cdot\log j}{\log n}\right\rceil-1$. The dimension we obtain is $k\cdot m=O(k\cdot N^{1/k}\cdot\ln N)=O(k\cdot n^{1/k}\cdot\ln n)$. This is since $N^{1/k}=(2^k\cdot n)^{1/k}=2n^{1/k}$, and $\ln N=\ln(2^k\cdot n) =O(k+\ln n)=O(\ln n)$, using that it makes no sense to take $k>\log n$.

We proved the following.

\begin{theorem}\label{thm:matousek-dist}
For any parameter $k\ge 1$, any $n$-point metric embeds into $\ell_\infty$ with prioritized distortion $2\left\lceil\frac{k\cdot\log j}{\log n}\right\rceil-1$ and dimension $O(k\cdot n^{1/k}\cdot\log n)$.
\end{theorem}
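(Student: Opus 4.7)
The plan is to carry out a priority-weighted variant of Matousek's Fr\'echet embedding: higher-priority points should effectively ``act like'' points in a much smaller metric space, so that the ball-growth argument in Matousek's analysis terminates after fewer steps when one of the endpoints is of high priority. To achieve this, I would create many duplicates of each high-priority point and then apply a Matousek-style sampling scheme to the enlarged (semi-)metric space.

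Concretely, I would bucket the priorities into $k$ bands $S_1,\dots,S_k$ with $S_i=\{x_j:n^{(i-1)/k}<j\le n^{i/k}\}$, and inflate each point in $S_i$ with $(2^k n)^{1-i/k}$ copies, at distance $0$ from it. The resulting semi-metric $X'$ has cardinality $N\le 2^k n$ by a geometric sum. I would then take $k$ sampling scales $i=1,\dots,k$, and at each scale sample $m=\Theta(N^{1/k}\log N)$ subsets $A_h^{(i)}\subseteq X'$ by selecting each point independently with probability $N^{-i/k}$. The embedding sends $x\in X$ to the concatenation of the $km$ coordinates $d(x,A_h^{(i)})$. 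The dimension is $km=O(kN^{1/k}\log N)=O(kn^{1/k}\log n)$ since $N^{1/k}=O(n^{1/k})$ and $\log N=O(\log n)$ (w.l.o.g.\ $k\le\log n$), and the embedding is non-expansive coordinate-wise by the triangle inequality.

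For the distortion, fix $x_j\in S_i$ and any $y\in X$, set $r=d(x_j,y)/(2i-1)$, and build an alternating sequence of disjoint closed/open balls $B_0=B(x_j,0),B_1=B^o(y,r),\dots,B_i$ of radii $0,r,2r,\dots,ir$. Because $x_j$ already has $(2^kn)^{1-i/k}\ge N^{1-i/k}$ copies, $|B_0|\ge N^{1-i/k}$, so (else the cardinality would exceed $N$) some index $b\in\{1,\dots,i\}$ satisfies $|B_{b-1}|\ge N^{1-(i-b+1)/k}$ and $|B_b|\le N^{1-(i-b)/k}$. At sampling scale $k-i+b$, a fixed $A_h^{(k-i+b)}$ hits $B_{b-1}$ with probability at least $N^{-1/k}/2$ and, independently (by disjointness), misses $B_b$ with probability at least $1/4$; hence the probability that \emph{no} sample at this scale both hits $B_{b-1}$ and misses $B_b$ is at most $(1-N^{-1/k}/8)^m\le 1/N^2$ for a suitable constant in $m$, and a union bound over the $O(n^2)$ pairs succeeds w.h.p. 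Every ``good'' coordinate contributes $|d(x_j,A)-d(y,A)|\ge r=d(x_j,y)/(2i-1)$, and since $i-1<k\log j/\log n\le i$ the prioritized distortion is $2\lceil k\log j/\log n\rceil-1$ as claimed.

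The main obstacle is not technical but one of calibration: one must choose the number of copies so that (i) the enlarged space remains of near-linear size, so the dimension stays $O(kn^{1/k}\log n)$, and (ii) the ``initial'' ball $B_0$ around a point of $S_i$ is automatically large enough ($\ge N^{1-i/k}$) to shorten the Matousek chain from $k$ steps to only $i$ steps. The choice $(2^kn)^{1-i/k}$ achieves both simultaneously; once this is in place, the hit/miss computations are standard $(1-p)^q\le e^{-pq}$ estimates, and the $\log N$ factor in $m$ is exactly what the union bound absorbs.
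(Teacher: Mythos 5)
Your proposal is correct and follows essentially the same route as the paper's proof: the same $k$-band bucketing $S_i$, the same $(2^k n)^{1-i/k}$ duplication scheme yielding $N\le 2^kn$, the same sampling probabilities $N^{-i/k}$ with $m=\Theta(N^{1/k}\log N)$ repetitions, and the same shortened ball chain $B_0,\dots,B_i$ with the pivotal-index/hit-miss argument. The calibration remark at the end accurately captures the key design choice that the paper also makes.
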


\subsection{Prioritized Dimension}

Note that in the construction of \sectionref{sec:prior-dist}, a point in $S_i$ has $(2^k\cdot n)^{1-i/k}$ copies, which intuitively means it is very unlikely that all its copies are missed by the a set $A_h^{(b)}$ for $b<k-i$ (that includes each point independently with probability $N^{-b/k}$). If a set hits at least one copy, it means that such a point will have a value of zero in the corresponding coordinate (the distance to that set is zero). However, it might get a high number of nonzeros from the sets with larger value of $b$. Recall that we repeat $O(N^{1/k}\ln N)$ times the random choices for each density $1\le b\le k$, then in particular, the last density level $b=k$ is very likely to incur high dimension for every point: all the copies of $x_1$, say, are expected to be missed by $\Omega(N^{1/k})$ of the sets in $\{A_h^{(k)}\}_h$ (and thus have many nonzero coordinates).

To alleviate this issue, we refine the sampling probabilities, and enforce exponentially smaller gaps between neighboring densities, as the exponent becomes closer to $-1$. Specifically, we will define the sets $S_i$ for every $0\le i<\log\log n$ by
\[
S_i=\{x_j~:~2^{2^i}<j\le 2^{2^{i+1}}\}~.
\]
Note that $x_1,x_2$ do not belong to any of these sets. To handle this, simply introduce additional $2$ coordinates, and map each $x\in X$ to the vector $(d(x,x_1),d(x,x_2))$. Clearly this is non-expanding, and has distortion 1 for pairs containing $x_j$ with $1\le j\le 2$. So in all that follows we only care for pairs containing $x_j$ for $j>2$.

Now, each point $x\in S_i$ will have $C(i)=\frac{n(\log\log n+1)^2}{2^{2^{i+1}}\cdot (i+2)^2}$ copies. Observe that $C(i)\ge 1$ for all $i<\log\log n$ (assuming $\log\log n$ is an integer).
Since $|S_i|\le 2^{2^{i+1}}$, we have that 
the total number of points in the new metric $(X',d')$ is
\[
N\le \sum_{i=0}^{\log\log n}C(i)\cdot 2^{2^{i+1}}=n(\log\log n+1)^2\cdot\sum_{i=0}^{\log\log n}\frac{1}{(i+2)^2}\le n(\log\log n+1)^2~.
\]
For every $0\le i\le\log\log n$ we define $R(i)=c\cdot 2^{(2^i+2)/k}\cdot\ln n$ as the number of samples to be taken with the appropriate density corresponding to $i$. Note that when $i=\log\log n$ we have $R(i)= \Theta(n^{1/k}\cdot\ln n)$ as in the previous section, but for small values of $i$, e.g. $i<\log k$, we have $R(i)\le 2c\cdot\ln n$.

We now define the embedding. For every $0\le i<\log\log n$ and $1\le s\le k$ sample $R(i)$ sets $\{A_h^{(s,i)}\}_{1\le h\le R(i)}$, so that every element in $X'$ is included in $A_h^{(s,i)}$ independently with probability \\ $\min\{\frac{2^{2^i(1+s/k)-2+2s/k}\cdot(i+2)^2}{N},1\}$.
To compensate for the increased number of points $N$, we also introduce additional $ c\cdot \ln n$ sets $\{E_g\}$ for every $1\le g\le c\cdot\ln n$, where each set $E_g$ includes every element of $X'$ independently with probability $\frac{1}{N}$.
The embedding $f:X\to\R^{O(k\cdot n^{1/k}\cdot\ln n)}$ is defined for every $x\in X$ as
\[
f(x)=\bigoplus_{s=1}^k\bigoplus_{i=0}^{\log\log n-1}\bigoplus_{h=1}^{R(i)}d(x,A_h^{(s,i)})\oplus\bigoplus_{g=1}^{c\cdot\ln n}d(x,E_g)~.
\]

\paragraph{Distortion bound.}
Note that $f$ is non-expansive, and we bound its contraction. Fix a pair $x_j,y\in X$ for $2< j\le n$, and let $0\le i< \log\log n$ be such that $x_j\in S_i$.
Define $\alpha=2ki+1$, $q=(\alpha+1)/2=ki+1$, and $r=\frac{d(x_j,y)}{\alpha}$.
Define $B_0=B(x_j,0)$, $B_1=B^o(y,r)$, $B_2=B(x_j,2r)$, and so on until $B_q$, as before.
Note that since $x_j\in S_i$, we have that $|B_0|=C(i)=\frac{n(\log\log n+1)^2}{2^{2^{i+1}}\cdot (i+2)^2}\ge \frac{N}{2^{2^{i+1}}\cdot (i+2)^2}$.

If there exists $1\le b\le k$ such that
\[
|B_{b-1}|\ge \frac{N\cdot 2^{(2^i+2)\cdot (b-1)/k}}{2^{2^{i+1}}\cdot (i+2)^2}~~~ {\rm and}~~~ |B_b|\le \frac{N\cdot 2^{(2^i+2)\cdot b/k}}{2^{2^{i+1}}\cdot (i+2)^2}~,
\]
then we will stop here. Otherwise, we have
\[
|B_k|> \frac{N\cdot 2^{(2^i+2)\cdot k/k}}{2^{2^{i+1}}\cdot (i+2)^2}\ge\frac{N}{2^{2^i}\cdot (i+1)^2}
\]
(using that $\frac{4}{(i+2)^2}\ge \frac{1}{(i+1)^2}$ holds for all $i\ge 0$). We then continue to ask whether there is a $1\le b\le k$ so that
\[
|B_{k+b-1}|\ge \frac{N\cdot 2^{(2^{i-1}+2)\cdot (b-1)/k}}{2^{2^i}\cdot (i+1)^2}~~~ {\rm and}~~~ |B_{k+b}|\le \frac{N\cdot 2^{(2^{i-1}+2)\cdot b/k}}{2^{2^i}\cdot (i+1)^2}~,
\]
if there is no such $b$, we get that
\[
|B_{2k}|> \frac{N\cdot 2^{(2^{i-1}+2)\cdot k/k}}{2^{2^i}\cdot (i+1)^2}\ge\frac{N}{2^{2^{i-1}}\cdot i^2}~.
\]
In general, after $\ell$ such phases we have that $|B_{\ell k}|>\frac{N}{2^{2^{i-\ell+1}}\cdot (i-\ell+2)^2}$. If all these phases fail, then $|B_{ik}|\ge N/16$. (Recall that $q>ki$, so all these balls do exist.)

Let us first see what happens if there is an $0\le \ell< i$ and $1\le b\le k$ such that
\begin{eqnarray*}
&|B_{\ell k+b-1}|\ge \frac{N\cdot 2^{(2^{i-\ell}+2)\cdot (b-1)/k}}{2^{2^{i-\ell+1}}\cdot (i-\ell+2)^2}=\frac{N}{2^{2^{i-\ell}(2-(b-1)/k)-2(b-1)/k}\cdot(i-\ell+2)^2}~~~ {\rm and}~~~\\& |B_{\ell k+b}|\le \frac{N\cdot 2^{(2^{i-\ell}+2)\cdot b/k}}{2^{2^{i-\ell+1}}\cdot (i-\ell+2)^2}=\frac{N}{2^{2^{i-\ell}(2-b/k)-2b/k}\cdot(i-\ell+2)^2}~.
\end{eqnarray*}
For each $1\le h\le R(i-\ell)$, let ${\cal E}_h^{hit}$ be the event that $B_{\ell k+b-1}\cap A_h^{(k-b,i-\ell)}\neq\emptyset$, and let ${\cal E}_h^{miss}$ be the event that $B_{\ell k+b}\cap A_h^{(k-b,i-\ell)}=\emptyset$.
Recall that $A_h^{(k-b,i-\ell)}$ contains each element of $X'$ independently with probability $$\frac{2^{2^{i-\ell}(1+(k-b)/k)-2+2(k-b)/k}\cdot(i-\ell+2)^2}{N}=\frac{2^{2^{i-\ell}(2-b/k)-2b/k}\cdot(i-\ell+2)^2}{N}~.$$
It follows that
\[
\Pr[{\cal E}_h^{hit}]=1-\left(1-\frac{2^{2^{i-\ell}(2-b/k)-2b/k}\cdot(i-\ell+2)^2}{N}\right)^{|B_{b-1}|}\ge 1-e^{-2^{-(2^{i-\ell}+2)/k}}\ge 2^{-(2^{i-\ell}+2)/k-1}~.
\]
\[
\Pr[{\cal E}_h^{miss}]=\left(1-\frac{2^{2^{i-\ell}(2-b/k)-2b/k}\cdot(i-\ell+2)^2}{N}\right)^{|B_b|}\ge 1/4~.
\]
Recall that $R(i-\ell)=c\cdot 2^{(2^{i-\ell}+2)/k}\cdot\ln n$, so the probability that none of the $1\le h\le R(i-\ell)$ have that both events occur is at most
\[
(1-2^{-(2^{i-\ell}+2)/k-3})^{c\cdot 2^{(2^{i-\ell}+2)/k}\cdot\ln n}\le 1/n^2~,
\]
whenever $c=16$.
Note that if there is an $h$ for which both events occur, it implies that
\[
|d(x_j,A_h^{(k-b,i-\ell)})-d(y,A_h^{(k-b,i-\ell)})|\ge r~,
\]
and thus the distortion for the pair $x_j,y$ is at most $\alpha$.

We now consider the case that no such $\ell$ and $b$ were found, so we have the guarantee that $|B_{ki}|\ge N/16$.
Recall that $B_q=B_{ki+1}$, and $|B_q|\le N$.
For each $1\le g\le c\ln n$, let ${\cal F}_g^{hit}$ be the event that $B_{ki}\cap E_g\neq\emptyset$, and let ${\cal F}_g^{miss}$ be the event that $B_{ki+1}\cap E_g=\emptyset$. Since the balls are disjoint these events are independent, and 
\[
\Pr[{\cal F}_g^{hit}]=1-\left(1-\frac{1}{N}\right)^{|B_{ki}|}\ge 1-e^{-N/(16N)}\ge \frac{1}{32}~.
\]
\[
\Pr[{\cal F}_g^{miss}]=\left(1-\frac{1}{N}\right)^{|B_{ki+1}|}\ge 1/4~.
\]
Thus, the probability that there is no $1\le g\le c\ln n$ for which both events occur, is bounded by
\[
\left(1-\frac{1}{128}\right)^{c\ln n}\le 1/n^2~,
\]
whenever $c$ is large enough. If there is an $1\le g\le c\ln n$ such that both events occur, then the distortion of the pair $x_j,y$ is bounded by $\alpha$ in this case as well. By the union bound, there is a constant probability (which can easily be made polynomially close to 1 by increasing $c$), that all pairs have such bounded distortion.
As $x_j\in S_i$ we have that $i\le \log\log j$, so the distortion for pairs containing $x_j$ is at most $\alpha =2k\log\log j+1$.

\paragraph{Bounding the dimension.} We now turn to proving a bound on the number of nonzero coordinates of $x_j$.
Let $0\le i< \log\log n$ be such that $x_j\in S_i$. Recall that $x_j$ has $C(i)=\frac{n(\log\log n+1)^2}{2^{2^{i+1}}\cdot (i+2)^2}\ge\frac{N}{2^{2^{i+1}}\cdot (i+2)^2}$ copies, and each element in chosen to be in $A_h^{(s,i)}$ independently with probability $\frac{2^{2^i(1+s/k)-2+2s/k}\cdot(i+2)^2}{N}$, where $1\le h\le R(i)$ and $R(i)=O(2^{(2^i+2)/k}\cdot\ln n)$.

For the first $k(i+1)$ collections of sets $\{A_h^{(s,t)}\}_h$ with $0\le t\le i$ and $1\le s\le k$, we will not try to prove that they contain a copy of $x_j$. The total number of such sets is
\begin{equation}\label{eq:reew}
k\cdot \sum_{t=0}^{i}R(t) = O(k\cdot\ln n\cdot (2^{(2^i+2)/k}+\log k))=O(k\cdot\ln n\cdot( j^{2/k}+\log k)).
\end{equation}
(The additive term of $\log k$ comes from the first $\log k$ terms in the summation, the rest is dominated by the last term.)
We also have an additional $c\cdot \ln n$ coordinates corresponding to the sets $\{E_g\}$, so the total number of coordinates so far is still as in \eqref{eq:reew}.

Consider the set $A_h^{(s,t)}$ for some $t\ge i+1$, and let $Y_h^{(s,t)}$ be an indicator random variable for the event that $A_h^{(s,t)}$ does not contain any copy of $x_j$. Observe that $x_j$ will have nonzero value in the coordinate of $A_h^{(s,t)}$ iff $Y_h^{(s,t)}=1$. Denote $Y^{(s,t)}=\sum_{h=1}^{R(t)}Y_h^{(s,t)}$, and $Y=\sum_{t\ge i+2}\sum_{s=1}^k Y^{(s,t)}$. The number of nonzero coordinates of $x_j$ from the sets $A_h^{(s,t)}$, for $t\ge i+1$, is equal to $Y$.

We will soon show that $\E[Y]\le c'\cdot k\cdot \ln n$ for some constant $c'$. Thus, by Chernoff bound (note the random variables $Y_h^{(s,t)}$ are independent)
\[
\Pr[Y>5c'k\cdot\ln n]\le e^{-2k\ln n}\le 1/n^2~.
\]
We will conclude that with constant probability, every $x_j$ has at most $O(k\cdot (j^{2/k}+\log k)\cdot\ln n)$ nonzero coordinates.

For $t\ge i+1$ we have that
\[
\E[Y_h^{(s,t)}]=\left(1-\frac{2^{2^t(1+s/k)-2+2s/k}\cdot(t+2)^2}{N}\right)^{\frac{N}{2^{2^{i+1}}\cdot (i+2)^2}}\le e^{-2^{2^t/k+2^t-2^{i+1}}}~.
\]
Then
\[
\E[Y^{(s,t)}]=\sum_{h=1}^{R(t)}\E[Y_h^{(s,t)}]\le O(2^{(2^t+2)/k}\cdot\ln n)\cdot e^{-2^{2^t/k+2^t-2^{i+1}}}~,
\]
and
\[
\E[Y]=\sum_{t\ge i+1}\sum_{s=1}^k\E[Y^{(s,t)}]\le O(k\cdot\ln n)\cdot \sum_{t\ge i+1}2^{(2^t+2)/k}\cdot e^{-2^{2^t/k+2^t-2^{i+1}}}=O(k\cdot\ln n)~,
\]
as the sum converges to a small constant.
We proved the following.

\begin{theorem}\label{thm:matousek-dim}
For any parameter $k\ge 1$, any $n$-point metric embeds into $\ell_\infty$ with prioritized distortion $2k\log\log j+1$ and prioritized dimension $O(k\cdot (j^{2/k}+\log k)\cdot\ln n)$.
\end{theorem}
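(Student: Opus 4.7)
The plan is to push the Frechet-set approach of \theoremref{thm:matousek-dist} further so that high-priority points receive few \emph{nonzero} coordinates, paying only a mild $\log\log j$ factor in distortion. The key observation is that if even one ``copy'' of $x_j$ lies in a sampled set $A$, then the coordinate $d(x_j, A)$ is zero; giving high-priority points many copies thus automatically suppresses their coordinates against densely-sampled sets. The difficulty is that for the sparsest sets (inclusion probability near $1/n$), no reasonable multiplicity of copies can force a hit, so one must carefully limit how many such sparse sets are used, which in turn forces a finer grid of intermediate densities than in Matousek's original construction.

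First, I would partition $X\setminus\{x_1,x_2\}$ into doubly-exponential buckets $S_i = \{x_j : 2^{2^i}<j\le 2^{2^{i+1}}\}$ for $0\le i<\log\log n$, giving each point of $S_i$ a count $C(i)\asymp n/(2^{2^{i+1}}(i+2)^2)$ of copies in a blown-up semi-metric $(X',d')$ of size $N=O(n\log^2\log n)$. Within each bucket index $i$ I would introduce $k$ density sub-scales $s=1,\ldots,k$, and for each $(s,i)$ pair sample $R(i)=\Theta(2^{(2^i+2)/k}\ln n)$ Frechet sets at inclusion probability roughly $2^{2^i(1+s/k)-2+2s/k}(i+2)^2/N$. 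Finally I would add $O(\log n)$ residual sets $E_g$, each sampled at probability $1/N$, to cover the sparsest scale. Handling $x_1,x_2$ with two dedicated coordinates $d(\cdot,x_1),d(\cdot,x_2)$ disposes of the edge case.

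Second, for the distortion analysis on $(x_j,y)$ with $x_j\in S_i$, I would set $\alpha=2ki+1$, $r=d(x_j,y)/\alpha$, and consider the disjoint ball sequence $B_0,B_1,\ldots,B_{ki+1}$ of radii $0,r,2r,\ldots$ alternately centered at $x_j$ and $y$. Since $|B_0|=C(i)\ge N/(2^{2^{i+1}}(i+2)^2)$ is already substantial, I would run the standard hit/miss argument in \emph{phases} $\ell=0,1,\ldots,i$: in phase $\ell$, I look for consecutive balls $B_{\ell k+b-1},B_{\ell k+b}$ straddling the density level indexed by $(k-b,i-\ell)$, in which case a single trial among $R(i-\ell)$ separates the pair with probability $\gtrsim 2^{-(2^{i-\ell}+2)/k}$, so the failure probability across all trials is at most $1/n^2$. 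If every phase fails, then $|B_{ki}|\ge N/16$ and the $O(\log n)$ residual sets at probability $1/N$ separate the pair. Since $i\le\log\log j$, a union bound over pairs yields prioritized distortion $\alpha\le 2k\log\log j+1$.

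Third, for the dimension of $x_j\in S_i$, I would charge the levels $t\le i$ together with the residuals directly, noting that $\sum_{t\le i}R(t)=O(\ln n\cdot(2^{(2^i+2)/k}+\log k))=O(\ln n\cdot(j^{2/k}+\log k))$, which multiplied by $k$ gives the claimed bound. For levels $t\ge i+1$, let $Y_h^{(s,t)}$ indicate that all $C(i)$ copies of $x_j$ miss $A_h^{(s,t)}$; a direct computation shows $\E[Y_h^{(s,t)}]\le e^{-2^{2^t/k+2^t-2^{i+1}}}$, and summing the resulting geometric-in-$t$ series gives $\E[Y]=O(k\ln n)$ for $Y=\sum_{t,s,h}Y_h^{(s,t)}$. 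Since the $Y_h^{(s,t)}$ are independent, a Chernoff bound concentrates $Y$ to $O(k\ln n)$ with probability $1-1/n^2$, and a union bound over $j$ closes the argument. The hardest part will be calibrating the two doubly-exponential schedules $C(i)$ and $R(i)$ simultaneously, so that the hit probability in each distortion phase and the convergence of the expected-miss series in the dimension argument both hold under a single coupled choice of parameters.
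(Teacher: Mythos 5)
Your proposal follows the paper's proof essentially verbatim: the doubly-exponential buckets $S_i$, the copy counts $C(i)\asymp n/(2^{2^{i+1}}(i+2)^2)$, the sample counts $R(i)=\Theta(2^{(2^i+2)/k}\ln n)$ with sampling probability $\approx 2^{2^i(1+s/k)-2+2s/k}(i+2)^2/N$, the residual sets $E_g$ at probability $1/N$, the two dedicated coordinates for $x_1,x_2$, the phased hit/miss distortion argument with $\alpha=2ki+1$ and fallback to the $E_g$ sets when $|B_{ki}|\ge N/16$, and the dimension bound splitting into a direct count for levels $t\le i$ and an expectation-plus-Chernoff argument for $t\ge i+1$ with $\E[Y_h^{(s,t)}]\le e^{-2^{2^t/k+2^t-2^{i+1}}}$. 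The only difference is cosmetic: you correctly write the dimension sum over $t\ge i+1$, where the paper has a harmless typo ($t\ge i+2$) in the definition of $Y$ that it silently corrects in the subsequent computation.
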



\section{Prioritized Embedding into a Single Tree}

Define $\Phi$ to be the family of functions $\alpha:\N\to\R_+$ that satisfy the following properties:
\begin{itemize}
\item $\alpha$ is non-decreasing.
\item $\sum_{i=1}^\infty 1/\alpha(i)< 1$.
\end{itemize}
For instance, one can take $\alpha(j)=c\cdot j\cdot \log j\cdot(\log\log j)^{1.1}$ for a suitable constant $c$.

Recall the result of \cite{EFN15}, who proved that every metric space embeds into a single tree with prioritized distortion $O(\alpha(j))$, for any function $\alpha\in \Phi$. It was also shown in \cite{EFN15} that for every $\alpha\notin\Phi$, there exists a metric that does not admit a prioritized embedding into a single tree with distortion less than $\alpha/8$.

Here we extend the result of \cite{EFN15}, by showing a prioritized embedding of any metric into an ultrametric (which is a special (and useful) kind of tree metric), and of a graph into one of its spanning tree, with prioritized distortion $O(\alpha(j))$ for any $\alpha\in\Phi$.

\subsection{Single Ultrametric}\label{sec:single-ultra}

An ultrametric $\left(U,d\right)$ is a metric space satisfying a
strong form of the triangle inequality, that is, for all $x,y,z\in U$,
$d(x,z)\le\max\left\{ d(x,y),d(y,z)\right\} $. The following definition
is known to be an equivalent one (see \cite{BLMN03}).
\begin{definition}\label{def:ultra}
An ultrametric $U$ is a metric space $\left(U,d\right)$ whose elements
are the leaves of a rooted labeled tree $T$. Each $z\in T$ is associated
with a label $\ell\left(z\right)\ge0$ such that if $q\in T$ is a
descendant of $z$ then $\ell\left(q\right)\le\ell\left(z\right)$
and $\ell\left(q\right)=0$ iff $q$ is a leaf. The distance between
leaves $z,q\in U$ is defined as $d_{T}(z,q)=\ell\left(\mbox{lca}\left(z,q\right)\right)$
where $\mbox{lca}\left(z,q\right)$ is the least common ancestor of
$z$ and $q$ in $T$.
\end{definition}

\begin{theorem}\label{thm:single-ultra}
For any finite metric space $(X,d)$ and any $\alpha\in \Phi$, there is a (non-contractive) embedding of $X$ into an ultrametric with priority distortion $2\alpha(j)$.
\end{theorem}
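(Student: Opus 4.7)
The plan is to construct the ultrametric top-down by a recursive ball partition, as suggested by the technical overview. Given a cluster $C \subseteq X$ with $|C|\ge 2$, set $\Delta = \diam(C)$ and create a tree node with label $\Delta$. Choose the center $c\in C$ to be one endpoint of a diameter pair of $C$, so that $d(c,x)\le \Delta$ for every $x\in C$ and the opposite endpoint $y^*$ satisfies $d(c,y^*)=\Delta$. Pick a radius $r\in[0,\Delta)$ (to be specified below), let $B:=\{x\in C : d(c,x)\le r\}$, and recurse on the two sub-clusters $B$ and $C\setminus B$ as the children of this node; at a singleton cluster, return a leaf with label $0$.

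The key step is choosing $r$ so that no ``bad'' pair is cut at the current level. For each $z\in C$ of priority $j_z$ define the forbidden interval $E_z := (d(c,z) - \Delta/(2\alpha(j_z)),\, d(c,z) + \Delta/(2\alpha(j_z)))$, whose length is $\Delta/\alpha(j_z)$. Call a pair $(x,y)$ with $x\in B$, $y\in C\setminus B$, and (WLOG) $j_x\le j_y$, \emph{bad} if $d(x,y)<\Delta/(2\alpha(j_x))$. For a bad pair, the triangle inequality gives $|d(c,x)-d(c,y)|\le d(x,y)<\Delta/(2\alpha(j_x))$, and since $d(c,x)\le r<d(c,y)$ this forces $r\in E_x$. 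Hence any $r\in[0,\Delta)\setminus\bigcup_{z\in C}E_z$ cuts no bad pair. The total measure of the forbidden union is at most $\sum_{z\in C}\Delta/\alpha(j_z)\le \Delta\cdot\sum_{j=1}^n 1/\alpha(j)<\Delta$, using that the priorities $\{j_z\}_{z\in C}$ are distinct and $\alpha\in\Phi$. Since $[0,\Delta)$ has measure $\Delta$, a valid $r$ exists. Moreover $d(c,y^*)=\Delta>r$ forces $y^*\notin B$, while $c\in B$, so both $B$ and $C\setminus B$ are nonempty proper subsets of $C$ and the recursion strictly reduces cluster size, hence terminates.

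For the distortion bound, consider any pair $(x,y)$ with $x\ne y$; it is first separated at some node $v$ in the recursion, and by \defref{def:ultra} we have $d_U(x,y)=\ell(v)=\Delta_v$, where $\Delta_v$ is the diameter of the cluster at $v$. By construction $d(x,y)\ge \Delta_v/(2\alpha(\min(j_x,j_y)))$, which gives $d_U(x,y)/d(x,y)\le 2\alpha(\min(j_x,j_y))\le 2\alpha(j)$ where $j$ is the index of either point under the prioritized-distortion convention. Non-contraction follows from $d_U(x,y)=\Delta_v\ge d(x,y)$, since both $x$ and $y$ lie inside the cluster at $v$. Labels are non-increasing along root-to-leaf paths because each child cluster has diameter at most $\Delta_v$, so the output is indeed a valid ultrametric as per \defref{def:ultra}.

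The main obstacle is the radius-selection lemma, and inside it the measure estimate $\sum_{z\in C}\Delta/\alpha(j_z)<\Delta$; this is precisely the place where the defining inequality $\sum_j 1/\alpha(j)<1$ of $\Phi$ is used, and combined with the triangle-inequality observation that bad cuts force $r\in E_x$, it drives the entire argument. The remaining components (choosing $c$ at a diameter endpoint to guarantee strict progress, verifying non-contraction, checking the ultrametric labeling constraint, and base cases) are then routine.
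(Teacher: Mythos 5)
Your proposal is correct and reaches the same conclusion via the same high-level construction: recursively partition $C$ with a ball $B(c,r)$ centered at a diameter endpoint $c$, labeled by $\diam(C)$, choosing $r$ so that no ``bad'' pair is cut. Where you diverge is in the proof of the radius-selection lemma. The paper's proof is algorithmic: it grows $r$ iteratively by $\Delta/(2\alpha(j))$ where $j$ is the smallest priority index in a currently-bad pair, and shows each index $j$ can trigger such a growth at most twice (once to bring $x_j$ into the ball, once to swallow the $\Delta/(2\alpha(j))$-neighborhood around it), giving total growth $\sum_j \Delta/\alpha(j) < \Delta$. Your proof replaces this counting argument with a measure-exclusion argument: each point $z\in C$ carves out a forbidden open interval $E_z$ of length $\Delta/\alpha(j_z)$, the union has measure $<\Delta$ by $\alpha\in\Phi$, and any $r$ in the complement works. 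Both arguments use $\sum 1/\alpha(j)<1$ in exactly the same place; yours is cleaner and non-iterative, while the paper's makes the greedy algorithm for finding $r$ explicit, which it then reuses essentially verbatim in the spanning-tree case (Theorem~\ref{thm:single-spanning}, where a petal must be grown incrementally), so the paper buys uniformity with the next section.

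One small imprecision: you declare a pair $(x,y)$ bad ``with $x\in B$, $y\in C\setminus B$, and (WLOG) $j_x\le j_y$'', but these two normalizations cannot both be imposed by relabeling --- the higher-priority endpoint need not be the one inside $B$. The fix is immediate and does not disturb anything downstream: let $z$ be the higher-priority endpoint of the bad pair (on whichever side). If $z\in B$ then $d(c,z)\le r<d(c,z')\le d(c,z)+d(z,z')<d(c,z)+\Delta/(2\alpha(j_z))$ forces $r\in E_z$; if $z\notin B$ then $d(c,z)>r\ge d(c,z')\ge d(c,z)-d(z,z')>d(c,z)-\Delta/(2\alpha(j_z))$ again forces $r\in E_z$. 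Either way $r$ falls in $E_z$ for the higher-priority endpoint, and since you take the union $\bigcup_{z\in C}E_z$ over all of $C$, the measure bound and conclusion are unaffected.
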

\begin{proof}
We create the ultrametric tree $T$ for $X$ by partitioning $X$ into $X_1$ and $X_2$, creating a root for $T$ labeled by $\Delta=\diam(X)$, and placing as its two children the trees $T_1$ and $T_2$ created recursively for $X_1$ and $X_2$ respectively. Clearly this embedding is non-contractive (as the distance between two points in $T$ is the diameter of the cluster in which they are separated). It remains to see how to generate a partition of $X$ that will have the required prioritized expansion. Note that a pair separated by the initial partition will have their distance exactly $\Delta$ in the ultrametric. This means that for each $j\in [n]$, we cannot separate $x_j$ from $x_i$ whenever
\begin{equation}\label{eq:bad}
2\min\{\alpha(i),\alpha(j)\}\cdot d(x_j,x_i)<\Delta~.
\end{equation}
Call a pair $x_j,x_i$ {\em bad} with respect to $X_1$ if $|\{x_j,x_i\}\cap X_1|=1$ and \eqref{eq:bad} holds.
We are now ready to define the partition of $X$:
Pick any pair $u,v\in X$ so that $d(u,v)=\Delta$. Initialize $r=0$, and repeat:
\begin{enumerate}
\item Set $X_1\leftarrow B(u,r)$.
\item Take the minimal $j$ such that there exists a bad pair (with respect to $X_1$) containing $x_j$. If there is no bad pair, stop.
\item Set $r\leftarrow r+\Delta/(2\alpha(j))$. Return to 1.
\end{enumerate}
We claim that at the end of the process $v\notin X_1$. To see this, we first argue that for each $1\le j\le n$, there can be at most two iterations in which $r$ increases by $\Delta/(2\alpha(j))$. This is because after the first such iteration, we claim that $x_j\in X_1$. This is because either $x_j\in X_1$ even before the radius increase, otherwise, it must be that $x_i\in X_1$, and the radius increases by $\Delta/(2\alpha(j))>d(x_i,x_j)$, so by the triangle inequality $x_j$ will be in $X_1$. After the second increase we have that $B(x_j,\Delta/(2\alpha(j)))\subseteq X_1$ by the minimality of $j$, so there are no more bad pairs $x_j,x_i$ with $j<i$ (since $\alpha$ is monotone). Using that $\alpha\in\Phi$, we obtain that the total increase in the radius is at most
\begin{equation}\label{eq:sum}
\sum_{j=1}^n\frac{\Delta}{\alpha(j)}<\Delta~.
\end{equation}
We conclude that the partition of $X$ to $X_1$ and $X_2=X\setminus X_1$ is non-trivial ($u\in X_1$ and $v\in X_2$). Furthermore, for every pair $x,y\in X$ separated by the partition, $x\in X_1$ and $y\in X_2$ must abide the required distortion, since this pair is not bad, thus \eqref{eq:bad} does not hold.

The proof that the recursive construction satisfies the distortion constraints is by induction on $n=|X|$. Indeed both $|X_1|,|X_2|<|X|$, so there are trees $T_1$ and $T_2$ that preserve all the distortion constraints among the pairs in ${X_1\choose 2}\cup{X_2\choose 2}$ (note that $\alpha$ is monotone, so using the induced priority ranking on $X_1$ and $X_2$ will only yield improved distortion bounds). Finally, all pairs in $X_1\times X_2$ suffer appropriate distortion by the discussion above.
\end{proof}

\subsection{Single Spanning Tree}

In this section we extend the partition technique used to prove \theoremref{thm:single-ultra}, and obtain an embedding of any graph into a single spanning tree with priority distortion.
\begin{theorem}\label{thm:single-spanning}
For any weighted graph $G=(V,E)$ and any $\alpha\in \Phi$, there is an embedding of $G$ into a spanning tree with priority distortion $O(\alpha(j))$.
\end{theorem}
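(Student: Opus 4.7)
The plan is to follow the petal decomposition framework of Abraham and Neiman~\cite{AN12} and replace the stochastic choice of petal radii by a deterministic, priority-aware ball-growing procedure analogous to the one used in the proof of \theoremref{thm:single-ultra}. Recall that a petal decomposition recursively partitions a cluster $C$ of diameter $\Delta$ (equipped with a designated center and a target vertex) into a collection of petals, each of which is essentially a union of consecutive balls around a vertex on a shortest path from the center toward the target; the petals are then decomposed recursively and their spanning trees are joined by bridge edges into a spanning tree of $C$.

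First I would lift the ``bad pair'' notion from the proof of \theoremref{thm:single-ultra} to the petal setting: a pair $(x_j,x_i)$ is \emph{bad} with respect to a candidate petal $P$ of a cluster of diameter $\Delta$ if exactly one of $x_j,x_i$ lies in $P$ and $2\min\{\alpha(i),\alpha(j)\}\cdot d_G(x_j,x_i)<\Delta$. I would then determine each petal's radius by the iterative procedure of \theoremref{thm:single-ultra}: initialize the radius at $0$, and while a bad pair exists, take the minimum priority $j$ appearing in one and increase the radius by $\Delta/(2\alpha(j))$. By the same counting argument used there, each index $j$ can trigger only a bounded (constant) number of increments before the $\Delta/(2\alpha(j))$-neighborhood of $x_j$ is consistently placed relative to the petal boundary, so by $\sum_j 1/\alpha(j)<1$ the total radius growth across the whole procedure is strictly less than $\Delta$. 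Since the petals in~\cite{AN12} are chosen with radius lying in a window of width $\Theta(\Delta)$, our deterministic choice lands inside this window and produces a legitimate petal.

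Next I would feed this per-petal rule into the outer recursion of~\cite{AN12}. The crucial structural property I will invoke is that the spanning sub-tree induced on a cluster of diameter $\Delta$ has diameter $O(\Delta)$, so any pair separated into distinct petals at this level has spanning-tree distance $O(\Delta)$. By construction such a pair $(x_j,x_i)$ is not bad, whence $\Delta\le 2\min\{\alpha(i),\alpha(j)\}\cdot d_G(x_j,x_i)$, yielding the desired priority distortion $O(\alpha(\min\{i,j\}))$ for the pair. Pairs that remain together at the current level are handled inductively inside their common petal, using that $\alpha$ is non-decreasing so the priority ranking restricted to the petal only tightens the bound; the recursion bottoms out at singleton petals where the trivial distortion~$1$ holds.

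The main obstacle will be to verify that the deterministic ball-growing mechanism integrates cleanly with the combinatorial machinery of~\cite{AN12} --- in particular, with the delicate creation rules for central and side petals and the way successive petals share boundary shortest paths so that the union is a spanning tree with bounded-stretch bridges between cluster centers. In~\cite{AN12} the randomness of the radii does double duty, bounding both the separation probability of a fixed pair and the expected stretch summed across levels. In our setting we no longer need any separation probability, since bad pairs are simply never separated, but we must still respect the prescribed radius windows and the shortest-path-growth order that make each petal a shortest-path-convex region and preserve the $O(\Delta)$ local-diameter property. The saving grace is that our total radius growth is strictly less than $\Delta$, leaving ample room inside the admissible window for all the geometric petal properties of~\cite{AN12} to carry over essentially verbatim.
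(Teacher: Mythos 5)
You have the right high-level plan --- it coincides with the paper's proof, which also runs the petal decomposition of \cite{AN12} and replaces the random radius choice by a deterministic, priority-driven ball-growing procedure mirroring \theoremref{thm:single-ultra}. However, as written, the constants do not close, and the gap is in exactly the two places you flagged as ``essentially verbatim.''

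First, the radius budget. In \cite{AN12} each petal radius must lie in the window $[0,\rad(X)/8]$, where $\rad(X)\le\diam(X)=\Delta$, so the admissible window has width at most $\Delta/8$. Your increments of $\Delta/(2\alpha(j))$, with each index $j$ firing a constant number of times, give a total radius growth bounded only by $\Delta\sum_j 1/\alpha(j)<\Delta$, which overshoots the window by roughly a factor $8$. There is no ``ample room.'' The paper instead increments by $\rad(X)/(16\alpha(j))$ and shows each $j$ fires at most twice, giving total growth at most $\rad(X)/8$ --- exactly the top of the admissible window, with no slack to spare.

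Second, the counting argument needs the petal-is-almost-a-ball property, which you do not invoke: a vertex within distance $\delta$ of the petal $P(t,r)$ lies in $P(t,r+4\delta)$. Petals are balls in a reweighted directed graph, not in $d_G$, so increasing the nominal petal radius by $\delta$ only guarantees absorption of the $\delta/4$-neighborhood. With your bad-pair threshold $2\min\{\alpha(i),\alpha(j)\}\cdot d_G(x_j,x_i)<\Delta$, a bad pair can have $d_G(x_j,x_i)$ nearly $\Delta/(2\alpha(j))$; after an increment of $\Delta/(2\alpha(j))$ the petal is only guaranteed to reach distance $\Delta/(8\alpha(j))$ farther, which fails to absorb the offending vertex, so the ``$j$ fires at most twice'' claim collapses. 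The paper repairs this by defining bad as $128\min\{\alpha(j),\alpha(j')\}\cdot d(x_j,x_{j'})<\rad(X)$, so that $d(x_j,x_{j'})<\rad(X)/(128\alpha(j))\le\rad(X)/(64\alpha(j))$, which is one quarter of the increment $\rad(X)/(16\alpha(j))$ --- precisely the $4\delta$ headroom the petal-growth property demands. Finally the $128$ is also needed to absorb the factor-$8$ gap between $\rad(X)$ and the diameter of the spanning subtree $T[X]$ when you convert ``not bad and separated'' into a distortion bound. So the route is the same as the paper's, but the proof as stated has a genuine quantitative hole; fixing it requires importing property~2 of petals and rescaling all constants accordingly.
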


\paragraph{Petal Decomposition}

We will make use of a construction of a spanning tree by a {\em Petal-Decomposition}, introduced in \cite{AN12}. This decomposition generates a hierarchical partition of the vertex set of a graph, which can be viewed as a laminar family over subsets of $V$.
Every cluster $X$ in the family is also associated with a center $x\in X$ and a target $t\in X$. The cluster $X$ is partitioned to $X_0,X_1,\dots,X_s$ (for some integer $s$), and the algorithm also specifies $s$ edges that connect these clusters in a tree manner. Each $X_i$ for $1\le i\le s$ is called a {\em petal}, and is generated in the graph induced on $Y_{i-1}=X\setminus(\cup_{j=1}^{i-1}X_j)$, by picking an arbitrary target $t_i$ (except perhaps for $t_1$, which is picked on the shortest path from $x$ to $t$), of distance at least $3\rad(X)/4$ from $x_0$,\footnote{The radius of a cluster $X$ with center $x$ is defined as $\rad(X)=\max_{y\in X}d(x,y)$. Observe that $\diam(X)/2\le\rad(X)\le\diam(X)$.} picking a radius $0\le r_i\le \rad(X)/8$, and setting $X_i=P(t_i,r_i)$. The final cluster $X_0=Y_s$, is simply the remaining vertices when there are no vertices sufficiently far from the center $x$.

We now explain how a petal $P(t,r)$ is defined.
Given a weighted undirected graph $G=(V,E,w)$ with a center $x$ and a desired target $t$ for the new petal, let $P_{tx}$ be the shortest path from $t$ to $x$. Let $\tilde{G}=(V,A,\bar{w})$ be the weighted directed graph created by adding the two directed edges $(u,v), (v,u) \in A$ for each $\{u,v\} \in E$, and setting $\bar{w}(u,v)=d_G(u,v)-(d_G(v,x)-d_G(u,x))$. However, the weight of each directed edge $(u,v)$ on the path $P_{tx}$, where $u$ is closer to $t$, is set to be $\bar{w}(u,v)=w(u,v)/2$ (note that by our definition it should have been $\bar{w}(u,v)=2w(u,v)$).
The Petal $P(t,r)$ is the ball around $t$ of radius $r/2$ in $\tilde{G}$ (the center of the new petal is defined as the farthest point from $t$ on $P_{tx}$ which lies in $P(t,r)$). The crucial properties of the Petal-Decomposition are (see \cite{AN12} for a proof):
\begin{enumerate}
\item For each cluster $X$, the tree $T[X]$ spanning $X$ has radius at most $4\rad(X)$.
\item For any $y\in V$ within distance $\delta$ from (some vertex in) the petal $P(t,r)$, it holds that $y\in P(t,r+4\delta)$.
\end{enumerate}
The first property mean that like in the ultrametric case, a separated pair in a cluster of radius $\Delta$ will be at distance $O(\Delta)$ in the final tree, and the second property suggests that petals are similar to balls.

\begin{proof}[Proof of Theorem~\ref{thm:single-spanning}]
We generate a tree $T$ by applying Petal-Decomposition on the graph. For each cluster $X$ generated in the process, that is partitioned into $X_0,X_1,\dots,X_s$, we must provide for every $1\le i\le s$ a radius from the range $[0,\rad(X)/8]$. Note that if a pair $x,y\in X$ is separated when creating $X_i$ (that is $x\in X_i$ and $y\in Y_i$), then by the first property their final distance in the tree will be at most $\diam(T[X])\le 8\rad(X)$. It follows that to satisfy the distortion constraints we can use essentially the same definition of bad pairs and algorithm for choosing $r$ as in \sectionref{sec:single-ultra}.
The only difference is that we will use larger constants: a pair $x_j,x_{j'}$ is called {\em bad} if they are separated when partitioning $X$ and $128\min\{\alpha(j),\alpha(j')\}\cdot d(x_j,x_{j'})<\rad(X)$. The process to find the radius $r_i$ for carving the next petal $X_i$ with a given target $t_i$ out of the graph induced on $Y_{i-1}$ is the following. Initialize $r_i=0$, and repeat:
\begin{enumerate}
\item Set $X_i\leftarrow P(t_i,r_i)$.
\item Take the minimal $j$ such that there exists a bad pair containing $x_j$. If there are no bad pairs, stop.
\item Set $r_i\leftarrow r_i+\rad(X)/(16\alpha(j))$. Return to 1.
\end{enumerate}
Using the second property of petals, we obtain that $j$ can appear as the minimal index at most twice. This is because after the first time, as the radius increased by $\diam(X)/(16\alpha(j))$, so vertices within distance $\rad(X)/(64\alpha(j))\ge d(x_{j'},x_j)$ will be contained in $X_i$ (where $j'>j$ is index of the other vertex in the bad pair). In particular, it follows that after the first time it must be that $x_j\in X_i$. After the second time $j$ is the minimal we get that $B(x_j,\rad(X)/(64\alpha(j)))\subseteq X_i$. Thus (by a similar argument to the section above) $x_j$ will never be the minimal of a bad pair again (until $r_i$ is set). By the calculation done as in \eqref{eq:sum}, the total increase in $r_i$ is at most $2\rad(X)/16=\rad(X)/8$, as required.

Finally, we bound the distortion of any pair $x_j,x_{j'}$ which is not bad. If this pair is separated while partitioning $X$, their distance in $T$ will be at most $8\rad(X)$, so by definition of a bad pair, the distortion they suffer is at most $8\cdot 128\alpha(j)=O(\alpha(j))$.

\end{proof}
\bibliographystyle{alpha}
\bibliography{art}

\end{document}